	\theoremstyle{remark}
	\newtheorem{question}[theorem]{Question}
\title{Sampling Arborescences in Parallel}
	\author{Nima Anari}
	\author{Nathan Hu}
	\author{Amin Saberi}
	\affil{Stanford University, \textsf{\{anari,zixia314,saberi\}@stanford.edu}}
	\author{Aaron Schild}
	\affil{University of Washington, \textsf{aschild@berkeley.edu}}
	\author{Nima Anari}{Stanford University, USA}{anari@cs.stanford.edu}{}{}
	\author{Nathan Hu}{Stanford University, USA}{zixia314@stanford.edu}{}{}
	\author{Amin Saberi}{Stanford University, USA}{saberi@stanford.edu}{}{}
	\author{Aaron Schild}{University of Washington, USA}{aschild@berkeley.edu}{}{}
	\authorrunning{Nima Anari, Nathan Hu, Amin Saberi, and Aaron Schild}
	\keywords{parallel algorithms; arborescences; spanning trees; random sampling}
\begin{document}
	\maketitle
	
	\begin{abstract}
		We study the problem of sampling a uniformly random directed rooted spanning tree, also known as an arborescence, from a possibly weighted directed graph. Classically, this problem has long been known to be polynomial-time solvable; the exact number of arborescences can be computed by a determinant \cite{Tutte48}, and sampling can be reduced to counting \cite{JVV86, JS96}. However, the classic reduction from sampling to counting seems to be inherently sequential. This raises the question of designing efficient parallel algorithms for sampling. We show that sampling arborescences can be done in \RNC{}.
		
		For several well-studied combinatorial structures, counting can be reduced to the computation of a determinant, which is known to be in \NC{} \cite{Csa75}. These include arborescences, planar graph perfect matchings, Eulerian tours in digraphs, and determinantal point processes. However, not much is known about efficient parallel sampling of these structures. Our work is a step towards resolving this mystery.
	\end{abstract}

	\section{Introduction}
\label{sec:intro}

Algorithms for (approximately) counting various combinatorial structures are often based on the equivalence between (approximate) counting and sampling \cite{JVV86, JS96}. This is indeed the basis of the Markov Chain Monte Carlo (MCMC) method to approximate counting, which is arguably the most successful approach to counting, resolving long-standing problems such as approximating the permanent \cite{JSV04} and computing the volume of convex sets \cite{DFK91}.

Approximate sampling and counting are known to be equivalent for a wide class of problems, including the so-called self-reducible ones \cite{JVV86, JS96}. This equivalence is nontrivial and most useful in the direction of reducing counting to sampling. However, for some problems, the ``easier'' direction of this equivalence, namely the reduction from sampling to counting, proves useful. For these problems, almost by definition, we can count via approaches other than MCMC.

One of the mysterious approaches to counting is via determinant computations. A range of counting problems can be (exactly) solved by simply computing a determinant. A non-exhaustive list is provided below.
\begin{itemize}
	\item Spanning trees in a graph can be counted by computing a determinant related to the Laplacian of the graph, a result known as the matrix-tree theorem \cite{Kir47}.
	\item Arborescences in a directed graph can be counted by computing a determinant related to the directed Laplacian \cite{Tutte48}.
	\item The number of perfect matchings in a planar graph can be computed as the Pfaffian (square root of the determinant) of an appropriately signed version of the adjacency matrix, a.k.a.\ the Tutte matrix \cite{Kas63}.
	\item The number of Eulerian tours in an Eulerian digraph is directly connected to the number of arborescences, and consequently the determinant related to the directed Laplacian \cite{BE87, ST41}.
	\item Given vectors $v_1,\dots,v_n\in \R^d$, the volume sampling distribution on subsets $S\in \binom{[n]}{d}$ can be defined as follows:
	\[ \P{S}\propto \det\parens*{\bracks{v_i}_{i\in S}}^2. \]
	The partition function of this distribution is simply $\det(\sum_{i} v_iv_i^\intercal)$ \cite[see, e.g.,][]{DR10}.
	\item The number of non-intersecting paths between specified terminals in a lattice, and more generally applications of the Lindström-Gessel-Viennot lemma \cite{Lin73,GV89}.
\end{itemize}

Efficient counting for these problems follows the polynomial-time computability of the associated determinants. In turn, one obtains efficient sampling algorithms for all of these problems; we remark that ot all of these problems are known to be self-reducible, but nevertheless ``easy'' slightly varied sampling to counting reductions exist for all of them.

While polynomial-time sampling for all of these problems has long been settled, we reopen the investigation of these problems by considering \emph{efficient parallel sampling} algorithms. We focus on the computational model of \PRAM, and specifically on the complexity classes \NC{} and \RNC{}. Here, a polynomially bounded number of processors are allowed access to a shared memory, and the goal is for the running time to be polylogarithmically bounded; the class \RNC{} has additionally access to random bits. Determinants can be computed efficiently in parallel, in the class \NC{} \cite{Csa75}, and as a result there are \NC{} counting algorithms for all of aforementioned problems. However, the sampling to counting reductions completely break down for parallel algorithms, as there seems to be an inherent sequentiality in these reductions.

Take spanning trees in a graph as an example. The classic reduction from sampling to counting proceeds as follows:
\begin{Algorithm*}
	\For{each edge $e$}{
		$A\leftarrow$ number of spanning trees containing $e$\;
		$B\leftarrow$ total number spanning trees\;
		Flipping a coin with bias $A/B$, decide whether $e$ should be part of the tree.\;
		Either contract or delete the edge $e$ based on this decision.\;
	}	
\end{Algorithm*}
Each iteration of this loop uses a counting oracle to compute $A, B$. However the decision of whether to include an edge $e$ as part of the tree affects future values of $A, B$ for other edges, and this seems to be the inherent sequentiality in this algorithm. The sampling-to-counting reduction for all other listed problems encounters the same sequentiality obstacle.

In this paper, we take a step towards resolving the mysterious disparity between counting and sampling in the parallel algorithms world. We resolve the question of sampling arborescences in weighted directed graphs, and as a special case spanning trees in weighted undirected graphs, using efficient parallel algorithms with access to randomness, a.k.a.\ the class \RNC{}.

We remark that the special case of sampling spanning trees in unweighted undirected graphs was implicitly solved by the work of \textcite{T95}, who showed how to simulate random walks in \RNC{}. When combined with earlier work of \textcite{Ald90,Bro89}, this algorithm would simulate a random walk on the graph, and from its transcript extract a random spanning tree. However, adding either weights or directions to the graph results in the need for potentially exponentially large random walks, which cannot be done in \RNC{}. Our work removes this obstacle.

\begin{theorem}\label{thm:main}
	There is an \RNC{} algorithm which takes a directed graph $G=(V, E)$ together with edge weights $w:E\to \R_{\geq 0}$ as input and outputs a random directed rooted tree $T$, a.k.a.\ an arborescence. The output $T$ follows the distribution
	\[ \P{T}\propto \prod_{e\in T} w(e). \]
\end{theorem}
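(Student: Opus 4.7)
The plan is to replace the sequential edge-by-edge sampling-to-counting reduction with a \emph{batched} version that commits to many edges simultaneously per round. The counting side is already in \NC{} via the directed matrix-tree theorem \cite{Tutte48} and Csanky's determinant algorithm \cite{Csa75}; I would first package this as an \NC{} oracle that, given disjoint sets $I, O \subseteq E$, returns the total weight of arborescences containing all of $I$ and none of $O$, computed as a determinant on the graph with $I$ contracted and $O$ deleted. This immediately yields \NC{} access to every single-edge inclusion marginal $\P{e \in T}$ and, by inclusion-exclusion, to joint marginals of any polylogarithmic edge set.

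Given this oracle, the algorithm would proceed in $\polylog(n)$ rounds. In each round, select a batch $B \subseteq E$, compute all single-edge marginals in $B$ in parallel, independently sample each $e \in B$ according to its marginal to form a tentative commitment $B^+ \subseteq B$, and then accept or reject $B^+$ against the true joint conditional distribution on $B$. The acceptance ratio is a quotient of determinants and is therefore computable in \NC{}. Because polynomially many independent trials can be run in parallel and the first success kept, any $1/\poly(n)$ acceptance probability suffices for \RNC{}. Upon acceptance, contract the edges of $B^+$, delete those in $B \setminus B^+$, and recurse on the residual weighted digraph.

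The main obstacle, where the technical heart of the argument lies, is to show simultaneously that the rejection step succeeds with probability at least $1/\poly(n)$ and that $|B|$ can be chosen a constant fraction of the remaining edges, so that the algorithm terminates in $\polylog(n)$ rounds. Both reduce to controlling the correlations of the arborescence measure on $B$: for undirected spanning trees this would follow from the strong Rayleigh property, but directed arborescences are not strongly Rayleigh in general, so a substitute is needed. I would seek it via the Cholesky structure of the reduced directed Laplacian, which should yield an ordering of $B$ along which the conditional marginals are well-behaved, and use a potential-function argument (for instance on the number of still-uncommitted non-root vertices) to certify the polylogarithmic round complexity. A clean way to ensure the marginals factorize tightly enough on a constant fraction of edges may also require exploiting the single-outgoing-edge constraint at each non-root vertex and randomizing the batch selection so that the chosen edges are spread across many vertices.
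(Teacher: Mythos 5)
Your proposal departs entirely from the route taken in the paper. The paper does not batch the sampling-to-counting reduction; it abandons that reduction altogether and instead parallelizes the Aldous--Broder random-walk algorithm. Concretely: it first reduces to sampling a rooted arborescence in a strongly connected \emph{Eulerian} digraph (by choosing the root via the Markov chain tree theorem and rescaling edge weights), then builds a hierarchical laminar clustering of the vertex set from a ``loose'' cycle decomposition, and finally simulates, for each cluster, only the first $\poly(n)$ edges that jump between its children, stitching these together with a doubling trick in the style of Teng plus a caching trick to avoid reusing randomness. The correctness hinges on a cover-time argument (Schur complements preserve Eulerianness, $H_v(s,s)=\deg(v)/\deg(s)$, etc.) showing that polynomially many jumping edges suffice to cover each cluster with high probability, so the extracted transcript contains all first-visit edges, from which Aldous--Broder recovers the arborescence.

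The concrete gap in your proposal is the step you flag yourself: you have no proof that a batch $B$ of constant linear size admits an acceptance probability of $1/\poly(n)$ under product-of-marginals rejection sampling. This is not a routine technicality. Even in the undirected, strongly Rayleigh case, negative association gives you correlation \emph{inequalities}, not a pointwise ratio bound $\mu(S)/q(S)\le\poly(n)$ uniformly over $S\subseteq B$; that ratio can blow up (e.g.\ across tensor powers of anticorrelated pairs) unless $B$ is chosen with great care, and it is precisely the rule for choosing $B$ and the quantitative bound that would constitute the theorem. In the directed case it is worse: the arborescence measure is generally \emph{not} strongly Rayleigh, the reduced directed Laplacian is not symmetric, and the ``Cholesky structure'' you invoke does not give an analogous self-adjoint determinantal representation, so there is no obvious substitute correlation bound to appeal to. Your potential-function sketch for round complexity also presupposes the batch size and acceptance bounds that are precisely what is missing. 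In short, the oracle layer of your plan is sound (determinant counting, contraction/deletion, inclusion-exclusion over $\polylog$-size sets are all in \NC{}), but the algorithmic core --- that a constant fraction of edges can be committed per round with polynomial acceptance probability --- is an open claim, not an established lemma, and I see no path to proving it that does not require a genuinely new structural result about arborescence correlations. The paper sidesteps this entirely by never needing to sample from the exact joint edge distribution; it samples a random-walk transcript instead, for which the only probabilistic estimate required is a (much more tractable) cover-time bound.
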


\subsection{Related Work and Techniques}

There is a long line of research on algorithms for sampling and counting spanning trees and more generally arborescences. The matrix-tree theorem of \Textcite{Kir47} showed how to count spanning trees in undirected graphs, and later \textcite{Tutte48} generalized this to arborescences in digraphs. Somewhat surprisingly \textcite{Ald90} and \textcite{Bro89} showed that random spanning trees and more generally random arborescences of a graph can be extracted from the transcript of a random walk on the graph itself. The main focus of subsequent work on this problem has been on improving the total running time of sequential algorithms for sampling. After a long line of work \cite{Wil96, CMN96, KM09, MST15, DPPR17, DKPRS17}, \textcite{Sch18} obtained the first almost-linear time algorithm for sampling spanning trees. More recently \textcite{ALOV20} improved this to nearly-linear time. Many of these works are based on speeding up the Aldous-Broder algorithm. Our main result, \cref{thm:main}, is also built on the Aldous-Broder algorithm, but we focus on parallelizing it instead of optimizing the total running time.

No almost-linear time algorithm is yet known for sampling arborescences in digraphs, as opposed to spanning trees in undirected graphs. While sampling spanning trees has a multitude of application \cite[see][]{Sch18}, there are a number of applications for the directed graph generalization. Most notably, there is a many-to-one direct correspondence between Eulerian tours in an Eulerian digraph and arborescences of the graph. This correspondence, known as the BEST theorem \cite{BE87,ST41} allows one to generate random Eulerian tours by generating random arborescences \cite[see][]{Cre10}. We leave the question of whether the correspondence in the BEST theorem is implementable in \NC{} to future work, but note that sampling Eulerian tours has interesting applications in biology and sequence processing \cite{JAGM08, RBCD08}. We remark that, unlike directed graphs, generating random Eulerian tours of \emph{undirected} Eulerian graphs in polynomial time is a major open problem \cite{TV01}.

In a slightly different direction related to this work, \textcite{BD13} considered the space complexity of counting arborescences. They showed this problem is in $\mathsf{L}$ for graphs of bounded tree-width, obtaining algorithms for counting Eulerian tours in these digraphs as well.

\begin{figure}
	\begin{Columns}[Top]<31>
		\Column<10>
			\Tikz*[scale=0.8]{
				\begin{scope}[every node/.style={line width=1, draw=Black, fill=LightGray, inner sep=5, circle}]
					\node[draw=Orange, fill=LightOrange] (a) at (0, 0) {};
					\node (b) at (2, 0) {};
					\node (c) at (4, 0) {};
				\end{scope}
				\path[-stealth, line width=1, Black]
				(a) edge [bend left=20] (b)
				(b) edge [bend left=20] (a)
				(b) edge [bend left=20] (c)
				(c) edge [bend left=20] (b);
				\draw (1,.6) -- (1,.6) node [midway]  {W};
				\draw (1,-.6) -- (1,-.6) node [midway]  {W};
				\draw (3,.6) -- (3,.6) node [midway]  {1};
				\draw (3,-.6) -- (3,-.6) node [midway]  {1};
			}
		\Column<1>
		\Column<20>
			\Tikz*[scale=0.8]{
				\node at (10, 0.7) {};
				\begin{scope}[every node/.style={line width=1, draw=Black, fill=LightGray, inner sep=5, circle}]
					\node[draw=Orange, fill=LightOrange] (0) at (0, 0) {};
					\node (1) at (2, 0) {};
					\node (2) at (4, 0) {};
					\node (3) at (6, 0) {};
					\node (n) at (10, 0) {};
				\end{scope}
				\node (dots) at (8, 0) {\Large $\dots$};
				\path[-stealth, line width=1, Black]
				(0) edge [bend left=20] (1)
				(1) edge [bend left=20] (2)
				(2) edge [bend left=20] (3)
				(3) edge [bend left=20] (dots)
				(1) edge [bend left=20] (0)
				(2) edge [bend left=20] (0)
				(3) edge [bend left=20] (0)
				(n) edge [bend left=20] (0)
				(dots) edge [bend left=20] (n);
			}
	\end{Columns}
	\begin{Columns}[Top]<31>
		\Column<10>
			\caption{Starting from left it takes $\Theta(W)$ steps to cover.}\label{fig:weighted-undirected}
		\Column<1>
		\Column<20>
			\caption{A random walk started from the left node covers all $n$ nodes in time $\Theta(2^n)$.}\label{fig:unweighted-directed}
	\end{Columns}
\end{figure}

Perhaps in the first major result of its kind in the search for \RNC{} sampling algorithms, \textcite{Ten95} showed implicitly how to sample spanning trees in undirected graphs in \RNC{}. \textcite{Ten95} showed how to parallelize the simulation of a random walk on a graph; more precisely, he showed how to output a length $L$ trace of a walk on size $n$ Markov chains in parallel running time $\polylog(L, n)$ using only $\poly(L, n)$ many processors. The Aldous-Broder algorithm extracts an arborescence from the trace of a random walk by extracting the so-called first-visit edge to each vertex. If a random walk is simulated until all vertices are visited at least once, the trace of the random walk has enough information to extract all such first-visit edges. This allows \RNC{} sampling of arborescences in graphs where the number of steps needed to visit all vertices, known as the cover time, is polynomially bounded. While the cover time of a random walk on an undirected unweighted graph is polynomially bounded in the size of the graph \cite[see, e.g.,][]{LP17}, adding either weights or directions can make the cover time exponentially large. See \cref{fig:unweighted-directed,fig:weighted-undirected}.

We overcome the obstacle of exponentially large cover times, by taking a page from some of the recent advances on the sequential sampling algorithms for spanning trees. Instead of simulating the entirety of the random walk until cover time, we extract only the first-visit edges to each vertex. We use the same insight used in several prior works that once a region of the graph has been visited, subsequent visits of the random walk can be \emph{shortcut} to the first edge that exits this region. However, this involves a careful construction of a hierarchy of ``regions'', and bounding the number of steps needed inside each region to fully cover it. Unlike undirected graphs, in directed graphs arguing about the covering time of a region becomes complicated. We build on some of the techniques developed by \textcite{BPS18} to bound these covering times in the case of Eulerian digraphs. We then reduce the arborescence sampling problem for arbitrary digraphs to that of Eulerian digraphs.

\subsection{Overview of the Algorithm}

At a high-level, our algorithm first proceeds by reducing the problem to sampling an arborescence from an Eulerian graph. We then construct a ``loose decomposition'' of the Eulerian graph into weighted cycles. We then construct a hierarchy of vertex sets by starting with the empty graph and adding the weighted cycles one by one, from the lowest weight to the highest, adding the connected components in each iteration to the hierarchy. This results in a hierarchical clustering of vertices, with the intuitive property that once we enter a cluster, we spend a lot of time exploring inside before exiting. Our algorithm proceeds by ``simulating'' polynomially many jumps between the children of each cluster, before exiting that cluster. The resulting jumps are then stitched together at all levels of the hierarchy to form a partial subset of the transcript of a random walk.

 The shallow ``simulation'' of edges jumping across children of a cluster in the hierarchy can be done in \RNC{} by using a doubling trick as in the work of \textcite{Ten95}; in order to ``simulate'' $L$ jumping edges, we combine the first $L/2$ with the last $L/2$ in a recursive fashion. A naive implementation of this is not parallelizable however, as one needs to know \emph{where} the first $L/2$ jumps eventually land before ``simulating'' the rest. However since the number of locations is polynomially bounded, one can precompute an answer for each possible landing location in parallel to simulating the first $L/2$ edges. Once shallow jumps are simulated, our algorithm then proceeds by stitching these jumping edges with jumping edges of one level lower (i.e., inside of the children clusters), and so on. Again, a naive implementation of this stitching is not in \RNC{}, but we employ a similar doubling trick and an additional caching trick to parallelize everything. We then argue that with high probability all of the edges extracted by this algorithm contain the first-visit edges to every vertex.

\section{Preliminaries}
\label{sec:prelim}


We use the notation $[n]$ to denote the set $\set{1,. . .,n}$.  We use the notation $\Tilde{O}(.)$ to hide polylogarithmic factors. When we use the term high probability, we mean with probability $1-\poly(\frac{1}{n})$, where $n$ is the size of the input, and the polynomial can be taken arbitrarily large by appropriately setting parameters.

\subsection{Graph Theory Notations}

When a graph $G= (V,E)$ is clear from the context, we use $n$ to refer to $\card{V}$ and $m$ to refer to $\card{E}$.

For a subset of vertices $S \subset V$, $\delta^+(S)$ denotes the set all incoming edges $\set{e = (u,v)\given v \in S, u \notin S}$. Similarly, $\delta^-(S)$ will denote the set of all outgoing edges $\set{e = (u,v)\given u \in S, v \notin S}$. Lastly, let $G(S)$ denote the subgraph induced by $S$.

\begin{definition}
Given a weighed graph $G=(V, E)$ with weights $w:E\to \R_{\geq 0}$, a subset of the vertices $S \subseteq V$ is said to be \textit{strongly connected} by edges of weight $w_c$ if for every $s, t, \in S$, there exists a path of edges inside $S$ connecting $s$ to $t$ such that every edge $e$ of the path has weight $w(e)\geq w_c$.
\end{definition}

\begin{definition}
An \emph{arborescence} is a directed graph rooted at vertex $r$ such that for any other vertex $v$, there is exactly one path from $v$ to $r$.
\end{definition}
Arborescences are also known as directed rooted trees and are a natural analog of spanning trees in directed graphs. When a background graph $G=(V, E)$ is clear from context, by an arborescence we mean a subgraph of $G$ that is an aborescence.

Given digraph $G=(V, E)$ together with a weight function $w:E\to \R_{> 0}$, the random arborescence distribution is the distribution that assigns to each arborescence $T$ probability
\[ \P{T}\propto \prod_{e\in T} w(e). \]
When $w$ is not given, it is assumed to be the constant $1$ function, and the random arborescence distribution becomes uniformly distributed over all arborescences. We view an undirected graph $G=(V, E)$ as a directed graph with double the number of edges, with each undirected edge producing two directed copies in the two possible directions. It is easy to see that the weighted/uniform random arborescence distribution on the directed version of an undirected graph is the same as the weighted/uniform spanning tree distribution on the original graph; viewing each arborescence without directions yields the corresponding spanning tree.

We call a (possibly weighted) digraph $G=(V, E)$ Eulerian if \[\sum_{e\in \delta^+(v)}w(e)=\sum_{e\in \delta^{-}(v)}w(e)\] for all vertices $v$. Note that the directed version of an undirected graph is automatically Eulerian.

\subsection{Random Walks and Arborescences}
Our approach centers around the Aldous-Broder algorithm on directed graphs \cite{Ald90,Bro89}. Their work reduces the task of randomly generating a spanning tree or arborescence to simulating a random walk on the graph until all vertices have been visited. While most famously known for generating spanning trees, this algorithm can be used to generate arborescences as well.

A weighted digraph defines a natural random walk. This is a Markov process $X_0, X_1, \dots$, where each $X_i$ is obtained as a random neighbor of $X_{i-1}$ by choosing an outgoing edge with probability proportional to its weight, and transitioning to its endpoint.
\[ \P{X_i\given X_{i-1}}\propto w(X_{i-1}, X_i). \]
A stationary distribution $\pi$ is a distribution on the vertices such that if $X_0$ is chosen according to $\pi$, then all $X_i$ are distributed as $\pi$. Under mild conditions, namely strong connectivity and aperiodicity, the stationary distribution is unique.

A random walk on an undirected graph is known to be time-reversible. That is if $X_0$ is started from the stationary distribution, then $(X_i, X_{i+1})$ is identical in distribution to $(X_{i+1}, X_i)$. This does not hold for directed graphs. However the time reversal of the process $\dots, X_i, X_{i+1}, \dots$ corresponds to a random walk on a different digraph.
\begin{definition}
	For a weighted digraph $G=(V, E)$ with weights $w:E\to\R_{\geq 0}$ and stationary distribution $\pi$, define the time-reversal to be the graph $G'=(V, E')$ on the same set of vertices, with edges reversed in direction, and weights given by
	\[ w'(v, u)=\pi(u)w(u, v)/\pi(v). \]
\end{definition}
The random walk on $G'$ shares the same stationary distribution $\pi$ as $G$. The random walk on $G'$ (started from $\pi$) is identical in distribution to the time-reversed random walk on $G$ (started from $\pi$); see \cite{LP17}.

\begin{theorem}[{\cite{Ald90, Bro89}}]\label{theo:aldousBroder}
Suppose that $G= (V,E)$ is a strongly connected weighted graph, with weights given by $w:E \to \R_{>0}$. Perform a time-reversed random walk, starting from a vertex $r$, until all vertices are visited at least once. For each vertex $v \in V \setminus \{r\}$, record the edge used in the random walk to reach $v$ for the first time. Let $T$ be the collection of all these first-visit edges (with directions reversed). Then $T$ is an arboresence of $G$ rooted at $r$, and
\[ \P{T} \propto \prod_{e \in T}w(e) \]
\end{theorem}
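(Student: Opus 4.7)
I would prove this classical theorem, due to Aldous and Broder, by reducing the distribution of the first-visit tree to an application of the Markov chain tree theorem. The first step is to check that $T$ is a valid arborescence. For each non-root vertex $v$ there is exactly one first-visit edge, which upon direction reversal yields a unique outgoing edge from $v$ in $G$; the resulting edge set is acyclic (a cycle would force some vertex to be first-visited more than once) and spans $V$ by the cover-time condition, so it forms an in-arborescence of $G$ rooted at $r$.

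The core of the argument is the probabilistic analysis. I would introduce an auxiliary Markov chain whose states are pairs $(v, A)$ with $A$ an in-arborescence of $G$ rooted at $v$, and whose transitions perform a single time-reversed walk step together with a rerooting operation on $A$ (adding one edge and removing another) chosen so that (i) the first coordinate evolves as the time-reversed walk on $G$, and (ii) the joint chain has stationary distribution proportional to $\pi(v) \prod_{e \in A} w(e)$ up to a universal normalization. Together with the Markov chain tree theorem, which expresses $\pi(v)$ itself as a sum of arborescence weights, this identifies the conditional distribution of $A$ given that the first coordinate is $r$ as $\P{A} \propto \prod_{e \in A} w(e)$.

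To conclude, I would show that the composition of rerooting operations along the reversed-walk trajectory produces, at the moment cover time is reached, a tree whose edges are precisely the first-visit edges with directions reversed. Consequently the tree coordinate of the auxiliary chain at the cover time coincides with the Aldous-Broder output $T$, and the claimed distribution $\P{T} \propto \prod_{e \in T} w(e)$ follows.

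The main obstacle I expect is verifying the stationary distribution of the auxiliary chain. The balance equations require summing over all predecessor states $(u, A')$ that can transition into a given $(v, A)$; the individual contributions, expressed using the reversed-transition identity $\tilde{P}(v, u) = \pi(u) P(u, v) / \pi(v)$, must combine to give the product form $\prod_{e \in A} w(e)$. Getting this collapse to work cleanly is where the structural input from the Markov chain tree theorem becomes essential, and is the most technical part of the plan.
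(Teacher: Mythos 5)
The paper cites this theorem from Aldous and Broder without giving a proof, so there is no in-paper argument to compare against; I therefore assess your plan on its own terms. Your overall strategy is indeed Broder's: an auxiliary Markov chain on (root, arborescence) pairs whose tree coordinate is rerooted by a one-in-one-out edge swap, with the Markov chain tree theorem supplying the normalization. The first two steps of your sketch are sound.

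The third step, however, has a genuine gap: applying a swap-one-edge rerooting step-by-step along a walk $Y_0, Y_1, \dots$ always yields at time $t$ an arborescence rooted at the \emph{current} position $Y_t$, and once the walk has covered it is the \emph{last-exit} tree of $Y_0,\dots,Y_t$ (the most recent departure edge from each vertex). But the theorem's $T$ is the \emph{first-visit} tree of the reversed walk started at $r$, which is rooted at $r = Y_0$, not at $Y_t$, and is built monotonically (edges are added once and never swapped out) — no one-in-one-out rerooting can produce it along the reversed walk. The correct linkage crosses the two time directions: run Broder's chain along the \emph{forward} walk $X_0,\dots,X_N$ on $G$ from stationarity; after coverage the tree $A_N$ is the last-exit tree of that forward walk, and, conditioning on $X_N = r$ and setting $Y_t := X_{N-t}$ (a walk on the time-reversal $G'$ started at $r$), the last-exit edge out of each vertex in the $X$-walk is exactly the first-visit edge into it in the $Y$-walk, with direction reversed. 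Letting $N \to \infty$ gives the claim. A smaller issue: the joint stationary law of the auxiliary chain is $\propto \prod_{e\in A} w(e)$, not $\propto \pi(v)\prod_{e\in A} w(e)$ — the state is determined by $A$ alone, and $\pi(v)$ appears as the marginal on the root via the Markov chain tree theorem rather than as a factor in the joint density. Your conditional conclusion is unchanged because $\pi(r)$ is constant once you condition on the root, but as written the proposed joint law does not satisfy the balance equations.
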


This allows us to sample $r$-rooted arborescence from a digraph by performing a random walk on the time-reversal. It is also known that among all arborescences, the total weight of those rooted at $r$ is proportional to $\pi(r)$, where $\pi$ is the stationary distribution \cite{Ald90, Bro89}. Therefore to resolve \cref{thm:main}, it is enough to show how to sample $r$-rooted arborescences.
\begin{theorem}
	There is an \RNC{} algorithm which takes a directed graph $G=(V, E)$ together with edge weights $w:E\to\R_{\geq 0}$ and a root vertex $r\in V$ as input and outputs a random $r$-rooted arborescence, where
	\[\P{T}\propto \prod_{e\in T} w(e). \]
\end{theorem}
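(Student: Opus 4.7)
The plan is to execute exactly the algorithmic pipeline sketched in the overview of the algorithm, and then verify correctness by appealing to \cref{theo:aldousBroder}. The strategy divides into four stages: (i) reducing to an Eulerian digraph, (ii) building a hierarchical decomposition by increasing edge weight, (iii) simulating truncated random walks between children of each cluster in \RNC{}, and (iv) stitching these truncated walks across the hierarchy to recover all first-visit edges.

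For stage (i), I would reduce the problem of sampling an $r$-rooted arborescence in an arbitrary weighted digraph $G$ to the same problem on an Eulerian digraph. The natural attempt is to add auxiliary edges incident to $r$ (or to modify weights on edges into $r$) in such a way that the modified graph $G^{\star}$ is Eulerian and the marginal distribution on $r$-rooted arborescences avoiding the added edges is preserved up to an easily computable factor. After this reduction we may assume $G$ itself is Eulerian, which is crucial because the cover-time and return-time tools of \textcite{BPS18} are available only in the Eulerian setting.

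For stage (ii), take a ``loose cycle decomposition'' of the Eulerian graph: a collection of weighted cycles whose edge-weight sum equals $w(e)$ on every edge. Sort the cycles in increasing order of their weights and add them one at a time; the connected components that appear over time form a laminar hierarchy of vertex sets $\mathcal{H}$. The key property we need is that once the Aldous--Broder walk (performed on the time-reversal, as per \cref{theo:aldousBroder}) enters a cluster $C\in \mathcal{H}$, it is extremely unlikely to exit before covering $C$, because exiting requires traversing one of the high-weight edges that glued $C$ to a super-cluster, whereas the low-weight cycles inside $C$ keep the walk recirculating. I would make this precise by adapting the Eulerian commute/cover time bounds of \textcite{BPS18} to show that, with high probability, $\poly(n)$ internal transitions suffice to cover $C$ before exiting, and that conditional on this, the distribution over the next exit edge and the covered internal trajectory is correctly captured by the truncated simulation.

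For stage (iii), I would simulate, inside each cluster $C$, a polynomially bounded number of ``jump'' edges between its children. Naively this is sequential, but following \textcite{Ten95} I would use a doubling trick: to produce $L$ jumps, recursively produce the first $L/2$ and the last $L/2$ in parallel, guessing in parallel every possible intermediate endpoint (there are only polynomially many vertices) and selecting the consistent branch afterwards. Running this across all clusters at all levels in parallel gives, for each cluster, a polynomially long sample of a truncated walk over its children. Stage (iv) then stitches: each jump between children of $C$ must be replaced, recursively, by a truncated walk inside the source child until it exits. Again this is naively sequential, but the same doubling-and-caching trick --- precomputing, for every possible entry vertex into a child, the exit distribution and the first-visit edges generated --- lets the stitching run in $\polylog(n)$ depth with $\poly(n)$ processors.

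The main obstacle I expect is the covering-time argument in stage (ii). Unlike undirected graphs, where cover times of induced subgraphs are easy to control, directed Eulerian graphs require delicate use of the hitting-time identities of \textcite{BPS18} together with a union bound across all clusters and all levels of $\mathcal{H}$. The second subtle point is correctness of the stitching: one must verify that the law of the stitched trajectory, restricted to its first-visit edges, agrees with the law of the true time-reversed Aldous--Broder walk restricted to the event that every cluster at every level is fully covered before being exited. Once that event is shown to hold with high probability, \cref{theo:aldousBroder} applied to the full (but only implicitly represented) walk yields an $r$-rooted arborescence with the correct weighted law, and a standard rejection/resampling on the low-probability failure event keeps the overall algorithm in \RNC{}.
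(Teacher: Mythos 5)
Your four-stage pipeline --- reduce to an Eulerian digraph, build a laminar hierarchy from a loose cycle decomposition, simulate jumping edges by doubling, stitch with caching, and invoke \cref{theo:aldousBroder} --- is the same skeleton as the paper's proof, but two of the stages as you sketch them have genuine gaps.

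For stage (i), adding edges incident to $r$ or rescaling only edges into $r$ cannot make a general digraph Eulerian; you must balance in- and out-degree at \emph{every} vertex. The paper's reduction is global: after adding edges out of $r$ to guarantee strong connectivity and computing the resulting stationary distribution $\pi'$, it multiplies each outgoing edge of each vertex $v$ by $\pi'(v)/\deg_{\mathrm{out}}'(v)$. This rescales every $r$-rooted arborescence by the same constant (each arborescence contains exactly one outgoing edge of each non-root vertex), so the target distribution is unchanged, while the stationarity equations force $\deg_{\mathrm{in}}''(v)=\deg_{\mathrm{out}}''(v)$.

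For stage (ii), your covering-time intuition is inverted. You argue the walk is trapped in a cluster because ``exiting requires traversing one of the high-weight edges that glued $C$ to a super-cluster, whereas the low-weight cycles inside $C$ keep the walk recirculating.'' But the walk chooses an outgoing edge with probability \emph{proportional} to its weight, so a high-weight exit edge is precisely the one the walk most wants to take; if the interior were low-weight and the exits high-weight, the walk would leave almost at once. The cycles must be added in \emph{decreasing} order of minimum edge weight, so that the leaves are the components bound together by the highest-weight cycles (the paper's informal overview contains the same ``lowest to highest'' slip, but \cref{alg:decomposition} and the proof of \cref{lem:connectedness} use decreasing order). With this ordering every cluster $\mathcal{S}$ is internally strongly connected by edges of weight at least $w_{\max}(\mathcal{S})/m$, where $w_{\max}(\mathcal{S})$ is the largest jumping-edge weight, and \cref{lem:clusterExploration} then bounds expected jumps before covering by $O(n^2 m^2)$, by multiplying the Eulerian bound on visits to the tail $u$ of a jumping edge $e$ by the per-visit exit probability $w_e/\deg(u)$. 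One more point on stage (iv): precomputing a single answer per subproblem and reusing it is not sound, since disjoint pieces of the stitched walk would be correlated; you need the explicit cache of $M=\poly(L)$ independent samples per subproblem together with the birthday-type bound of \cref{lem:birthday} to guarantee no cached answer appears twice in the unravelled final transcript.
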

We can fix the root because the stationary distribution $\pi$ can be computed in \NC{}, by solving a system of linear equations.

\subsection{Parallel Algorithms}

In this paper we consider parallel algorithms run on the \PRAM{} model, and we will construct algorithms to show that sampling problems are in the class \RNC{}, the randomized version of \NC{}.  In this class, we are  allowed  to  use  polynomially  (in  the  input  size)  many  processors  who  share  access  to  a  common random access memory, and also have access to random bits. The running time of our algorithms must be polylogarithmic in the input size.

While our work hinges upon the simulation of a random walk, a process which is inherently sequential, for polynomially many steps, such a task is known to be in \RNC{} through the use of a ``doubling trick.''
\begin{theorem}[\cite{Ten95}] \label{theo:parallelWalks}
Suppose that $G = (V,E)$ is a directed graph, with weights given by $w:E \to \R_{\geq 0}$.  The transcript of a random walk starting from a given vertex $v_0 \in V$ and running for $L$ time steps can be produced in the \PRAM{} model using $\poly(\card{E},L)$ processors in $\polylog(\card{E},L)$ time.
\end{theorem}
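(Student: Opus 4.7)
The plan is to exploit a divide-and-conquer (or ``doubling'') strategy on the length $L$ of the walk. The one-step transition distribution at a vertex $u$ is easy to sample in parallel (each processor independently samples an outgoing neighbor of its assigned vertex with probability proportional to edge weight), so the base case $L=1$ is immediate. The challenge for $L>1$ is that, if we tried to split a single walk of length $L$ into its first and second halves of length $L/2$ and simulate both halves in parallel, the second half would depend on where the first half lands, seemingly forcing sequentiality.

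To circumvent this, I would strengthen the inductive statement: produce, for \emph{every} starting vertex $u\in V$, an independent random walk $W_u$ of length $L$, using fresh randomness per starting vertex. This batched version admits a clean recursion. Recursively compute two independent collections $\{W_u^{(1)}\}_{u\in V}$ and $\{W_u^{(2)}\}_{u\in V}$ of walks of length $L/2$. Then, in parallel over all $u\in V$, read the endpoint $u^\star$ of $W_u^{(1)}$ and define
\[
W_u \;=\; W_u^{(1)} \;\cdot\; W_{u^\star}^{(2)},
\]
i.e., concatenate the first-half walk from $u$ with the pre-sampled second-half walk that happened to start at $u^\star$. The Markov property guarantees that $W_u$ is distributed exactly as a length-$L$ walk from $u$, because conditioned on being at $u^\star$ after $L/2$ steps, the remaining $L/2$ steps are an independent fresh walk from $u^\star$. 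The target walk of the theorem is simply $W_{v_0}$.

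For complexity, at recursion depth $k$ we have walks of length $L/2^k$, so the recursion terminates at depth $O(\log L)$. At each level, the only work beyond the recursive calls is: (i) reading endpoints of $n$ walks, and (ii) performing $n$ concatenations each of length $L$. Both can be done in $O(\log(nL))$ parallel time with $\poly(|E|,L)$ processors, using standard parallel array operations to fetch and write the walk contents. Summed over the $O(\log L)$ levels of recursion, the total depth is $\polylog(|E|,L)$ and the processor count remains $\poly(|E|,L)$, as required.

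The main obstacle, as noted, is breaking the apparent sequential dependence between the two halves of the walk. The key idea that makes the plan go through is paying an $n$-factor in processors by speculatively computing a length-$L/2$ second-half walk from \emph{every} possible landing vertex before the first half is known; since there are only $n\le|E|$ possible landing vertices, this is affordable in the \PRAM{} model. The recursion then ``ties the knot'' in a single parallel step per level by looking up the appropriate precomputed walk, and the correctness of the induced distribution is immediate from the Markov property.
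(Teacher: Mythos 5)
Your proposal is essentially the same as the paper's (and Teng's) argument: both rely on the doubling trick combined with speculatively precomputing a length-$L/2$ continuation from every possible landing vertex, paying an affordable factor of $|V|$ in processors. The paper frames it slightly differently --- pre-sampling a random stream $X_1,\ldots,X_L$ once and filling in a deterministic table $\texttt{End}(v,t,l)$ of endpoints by a doubling recursion on $l$ --- but the substance is the same.

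One small correction to your stated inductive hypothesis: the collection $\{W_u\}_{u\in V}$ produced at a given recursion level is \emph{not} jointly independent. If two first-half walks $W_u^{(1)}$ and $W_v^{(1)}$ land at the same vertex $u^\star = v^\star$, then $W_u$ and $W_v$ share the second half $W_{u^\star}^{(2)}$ and are therefore coupled. This does not break the argument: all you actually use is that each $W_u$ is \emph{marginally} distributed as a length-$L$ walk from $u$, and that the two half-collections (coming from disjoint recursive calls with fresh randomness) are independent \emph{of each other}. Weakening the inductive hypothesis to marginal correctness plus cross-collection independence makes the proof go through cleanly.
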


\Cref{theo:aldousBroder} and \cref{theo:parallelWalks} naturally lead to a parallel algorithm that randomly samples arborescences using $\poly(\card{E},L)$ processors and $\polylog(\card{E},L)$ time where $L$ is the cover time of the digraph. However, $L$ can be exponential in the size of the graph as the cover time depends closely on the edge weights of a graph. In directed graphs, the cover time can be exponential even if we do not allow weights. See \cref{fig:unweighted-directed,fig:weighted-undirected}.

Luckily, \cref{theo:aldousBroder} only needs the first-visit edges to produce the random arborescence. This insight has been heavily used to improve the running time of sequential algorithms for sampling spanning trees \cite{KM09, MST15, Sch18}, by \textit{shortcutting} the random walk. Here we use the same insight to design an \RNC{} algorithm. Our algorithm identifies a hierarchy of clusters of vertices $S \subseteq V$ and will only simulate incoming and outgoing edges of each cluster as opposed to the entire walk. To do this, we will use a well-known primitive that the probability of entering or exiting a cluster through any edge can be computed using a system of linear equations.
\begin{lemma}[{\cite[see, e.g.,][]{Sch18}}] \label{lem:shortcuttingPrimitive}
Given a set $S\subset V$ and vertex $v\in S$, the probability of a random walk started at $v$ exiting $S$ through any particular edge $e\in \delta^-(S)$ can be computed by solving a system of linear equations involving the Laplacian.
\end{lemma}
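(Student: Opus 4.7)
The plan is a first-step analysis on the random walk restricted to $S$. Fix an exit edge $e = (u, u') \in \delta^-(S)$ with $u \in S$ and $u' \notin S$, and for every vertex $x \in S$ let $f(x)$ denote the probability that a random walk started at $x$ exits $S$ by traversing $e$. Conditioning on the first step and invoking the strong Markov property gives
\[ f(x) = \sum_{y \in S} P(x, y)\, f(y) + \begin{cases} w(e)/W(u) & \text{if } x = u, \\ 0 & \text{otherwise}, \end{cases} \]
where $P(x,y) = w(x,y)/W(x)$ is the one-step transition probability and $W(x) = \sum_{z} w(x,z)$ is the total out-weight at $x$. The sum records the contribution when the first step stays inside $S$, while the forcing term handles the event that the very first step is the exit across $e$, which is possible only from $u$.

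Assembling these equations into one linear system produces $(I - P_S)\, f = b_e$, where $P_S \in \R^{S \times S}$ is the principal submatrix of the walk's transition matrix indexed by $S$ and $b_e$ is supported at $u$ with value $w(e)/W(u)$. Multiplying the $x$-th row of this system by $W(x)$ converts $I - P_S$ into the principal submatrix $L_{S,S}$ of the directed-graph Laplacian $L = D^+ - A^{\top}$, where $D^+$ is the diagonal matrix of out-weights and $A$ is the weighted adjacency matrix. Thus the computation reduces to a Dirichlet-type linear system on the Laplacian with ``boundary'' set $V \setminus S$, and the desired exit probability $f(v)$ is read off as a single coordinate of the solution. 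Since the right-hand side depends linearly on $e$, iterating over $e \in \delta^-(S)$ amounts to solving the same Laplacian system against a handful of different right-hand sides.

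The one substantive step is invertibility of $L_{S,S}$ (equivalently of $I - P_S$). Because $G$ is strongly connected and $V \setminus S$ is nonempty, from every $x \in S$ there is a path of positive probability that leaves $S$ in finitely many steps; a standard Perron--Frobenius argument then forces the spectral radius of $P_S$ to be strictly less than $1$, so $I - P_S$ is nonsingular. Probabilistically, invertibility can also be seen from the fact that the walk almost surely escapes $S$, so the Neumann series $\sum_{k \geq 0} P_S^k$ converges and equals $(I - P_S)^{-1}$. Once invertibility is in hand, the remainder of the proof is routine linear-algebra bookkeeping; the only non-mechanical point is this spectral-radius bound together with the slightly pedantic row-scaling that puts the system into recognizable Laplacian form.
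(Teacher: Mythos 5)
The paper states this lemma with a pointer to \cite{Sch18} and gives no proof of its own, so there is no internal argument to compare against. Your proof is correct and is precisely the standard first-step (Dirichlet) analysis that the cited sources use: the exit probabilities satisfy $(I-P_S)f = b_e$, row-scaling by out-degree turns this into the principal submatrix $L_{S,S}$ of the directed Laplacian, and invertibility of $L_{S,S}$ follows because the walk escapes $S$ almost surely (equivalently, the spectral radius of $P_S$ is below $1$). One small bookkeeping point: with the convention $A_{xy}=w(x,y)$, the relevant Laplacian is $L=D^+-A$ rather than $D^+-A^{\top}$; with the transposed convention $A_{xy}=w(y,x)$ your formula is the right one, so this is a matter of notation, not substance.
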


Some of the most powerful primitives for \NC{} algorithms come from linear algebra. In particular, multiplying matrices, computing determinants, and inverting matrices all have \NC{} algorithms \cite{Csa75}. Combining this with \cref{lem:shortcuttingPrimitive} we obtain the following folklore result.

\begin{lemma}
\label{lem:NCshortcutting}
Given a set $S \in V$ and vertex $v \in S$, the probability of a random walk started at $v$ exiting $S$ through any particular edge $e \in \delta^-(S)$ can be computed in \NC{}.
\end{lemma}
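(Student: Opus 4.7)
The plan is to combine \cref{lem:shortcuttingPrimitive} with Csanky's theorem \cite{Csa75} in the most direct way possible. First, I would invoke \cref{lem:shortcuttingPrimitive} to express the desired exit probability $\P{\text{exit } S \text{ through } e}$ as the solution of a linear system $A x = b$ built from a submatrix of the (directed) Laplacian of $G$ restricted to $S$, together with a right-hand side encoding the particular edge $e \in \delta^-(S)$. The size of this system is at most $|S| \leq n$, so both $A$ and $b$ have entries of bit-length polynomial in the input size, and the matrix $A$ can itself be assembled in \NC{} from the weight function $w$ by straightforward parallel arithmetic.

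Next, I would appeal to the fact that the basic linear algebra primitives---matrix multiplication, determinant, and matrix inversion---all admit \NC{} algorithms \cite{Csa75}. In particular, inverting the coefficient matrix $A$ (or equivalently solving $Ax=b$ via Cramer's rule) can be done using $\poly(n)$ processors in $\polylog(n)$ time. Multiplying $A^{-1}$ by $b$ is then a single matrix-vector product, also in \NC{}. Reading off the appropriate coordinate of the resulting vector yields the probability of exiting $S$ through $e$.

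The only mild subtlety is ensuring that the linear system from \cref{lem:shortcuttingPrimitive} is nonsingular so that the solution is well-defined; this follows from the standard fact that the Laplacian restricted to a proper subset of vertices of a strongly connected (or at least suitably connected) component is invertible, and one can handle the edge cases (e.g.\ vertices from which $S$ cannot be exited) separately in \NC{} by first computing reachability information, which is itself a standard \NC{} primitive. I do not expect any real obstacle here; the lemma is essentially a one-line corollary of \cref{lem:shortcuttingPrimitive} and Csanky's theorem, and is labeled folklore for precisely that reason.
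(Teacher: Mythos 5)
Your proposal matches the paper's reasoning exactly: the paper simply observes that \cref{lem:shortcuttingPrimitive} reduces the problem to solving a Laplacian linear system, and that linear-algebra primitives (determinant, inversion, multiplication) are in \NC{} by \cite{Csa75}, then states the lemma as a folklore corollary with no further proof. Your extra remarks on nonsingularity and edge cases are sensible additions but not part of the paper's argument.
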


\subsection{Schur Complements}
A key tool which we shall use in our analysis of random walks is Schur complements. 
\begin{definition}
For any weighted digraph $G = (V, E)$, for any subset of the vertices $S \subseteq V$, we define the graph
$G_{S}$  to be the \textit{Schur complement} of $\bar{S}$. $G_{S}$ is the directed graph formed by starting with the induced sub-graph of $S$. Then, for every pair of vertices ($u,v$) in $S$, add an edge from $u$ to $v$ with weight $\deg(u)\P{u \xrightarrow{} v}$, where $\P{u \xrightarrow{} v}$ is the probability that a random walk on $G$ currently at $u$ exits $S$ with the next move and its first return to $S$ is at the vertex $v$. 
\end{definition}

Then, it is easy to see that for any starting vertex $v_0 \in S$, the distribution over random walk transcripts in $G_S$ starting at $v_0$ is the same as the distribution over random walk transcripts in $G$ starting at $v_0$
with all vertices outside of $S$ removed. We also observe that:

\begin{proposition} \label{prop:eulerianSchure}
For any Eulerian weighted digraph $G = (V, E)$, for any subset of the vertices $S \subset V$, the Schur complement $G_S$ is also an Eulerian weighted digraph and the degree of any vertex in $S$ is the same in $G_S$ as in $G$.
\end{proposition}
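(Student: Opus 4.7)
The plan is to first compute the out-degree of any $u \in S$ in $G_S$ and show it equals $\deg_G(u)$, and then leverage the equivalence between random walks on $G_S$ and induced walks on $S$ in $G$ to conclude that $G_S$ is Eulerian with matching in-degrees.

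For the out-degree, I would partition the outgoing edges at $u$ in $G_S$ into those inherited from the induced subgraph on $S$ and the added ``Schur'' edges. The first group contributes $\sum_{v \in S} w(u, v)$. For the second, I would use the fact that in an Eulerian digraph (without loss of generality strongly connected, since an Eulerian digraph admits no edge crossing between distinct strongly connected components), the walk returns to $S$ almost surely after leaving it. Hence $\sum_{v \in S} \P{u \xrightarrow{} v}$ equals the probability that the walk leaves $S$ on its first step from $u$, which is $\sum_{v \notin S} w(u, v)/\deg_G(u)$. Adding the two contributions yields the total out-degree $\sum_{v \in V} w(u, v) = \deg_G(u)$.

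For the Eulerian property, I would invoke the equivalence stated just before the proposition: the walk on $G_S$ has the same distribution as the walk on $G$ with visits outside $S$ suppressed. Since $G$ is Eulerian, its stationary distribution is proportional to $\deg_G$, so the induced walk on $S$, and hence the walk on $G_S$, has stationary distribution $\pi_{G_S}(v) \propto \deg_G(v)$. Substituting this into the stationarity equation $\pi_{G_S}(v) = \sum_u \pi_{G_S}(u) \cdot w_{G_S}(u, v) / \deg_{G_S}^-(u)$ for $G_S$, and using the out-degree identity $\deg_{G_S}^-(u) = \deg_G(u)$ from the first part, gives
\[ \deg_G(v) = \sum_{u \in S} \deg_G(u) \cdot \frac{w_{G_S}(u, v)}{\deg_G(u)} = \sum_{u \in S} w_{G_S}(u, v). \]
The right-hand side is precisely the in-degree of $v$ in $G_S$, so in-degree equals out-degree equals $\deg_G(v)$, proving $G_S$ is Eulerian with the same vertex degrees as $G$.

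The main subtlety will be in the first part: justifying that $\sum_{v \in S} \P{u \xrightarrow{} v}$ accounts for the full probability of leaving $S$, which amounts to showing that excursions outside $S$ terminate almost surely. This follows from the strong-connectivity reduction available under the Eulerian hypothesis. Everything else reduces to the standard fact that Eulerian digraphs have degree-proportional stationary distributions.
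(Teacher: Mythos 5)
The paper states \cref{prop:eulerianSchure} without a proof, treating it as an observation, so there is no ``paper's own proof'' to compare against. Your argument is correct and self-contained; here is a short appraisal.

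Your two-step structure is sound. The out-degree computation is exactly right: since each connected component of an Eulerian digraph is strongly connected, every excursion out of $S$ returns to $S$ almost surely, so $\sum_{v \in S} \P{u \to v}$ is the total probability of leaving $S$ in one step from $u$, and the two contributions (induced edges and Schur edges) telescope to $\deg_G(u)$. The in-degree via the watched-chain stationary distribution is also a clean route: the induced chain on $S$ has stationary measure $\pi|_S$, and combined with your first part this yields column sums equal to $\deg_G(v)$. Two small things are worth being explicit about if you write this up fully. First, the watched-chain fact (stationary distribution of the censored chain is the restriction of $\pi$ to $S$, renormalized) itself deserves a citation or a one-line ergodic-average argument. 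Second, you implicitly assume $\P{u\to v}$ is well defined, i.e.\ that the excursion returns; you address this, but you should also note that the Schur complement is only defined on components meeting $S$, which is automatic here since each Eulerian component is strongly connected.

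A slicker alternative, closer in spirit to why the authors likely left this as an observation, is the linear-algebraic one: the directed Laplacian $L = D - A$ of an Eulerian digraph has the all-ones vector in both its left and right kernel (row sums vanish always; column sums vanish precisely when $G$ is Eulerian), and the matrix Schur complement $L_{SS} - L_{S\bar S} L_{\bar S \bar S}^{-1} L_{\bar S S}$ inherits both vanishing row and column sums directly from the block identities $L\mathbf{1}=0$ and $\mathbf{1}^\top L = 0$. This gives the Eulerian property in one step; the degree-preservation claim then reduces to your out-degree computation. Your probabilistic proof and the matrix one are both valid; the matrix one avoids invoking any facts about censored Markov chains, while yours stays entirely in the random-walk language that the paper uses elsewhere (e.g.\ in the proof of \cref{prop:timeBetweenReturns}).
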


	\section{Reduction to Eulerian Graphs}
\label{sec:reduction}

A key part of our algorithm is the analysis of the time it takes for random walks to cover regions of a digraph. Cover times are in general difficult to analyze on directed graphs, but the situation becomes much easier on Eulerian graphs. For more precise statements, see \cref{sec:exploration}. As a first step in our algorithm, we show how to reduce the design of sampling algorithms on arbitrary digraphs to Eulerian ones.

We can reduce the problem of sampling an arborescence of digraph $G$ to sampling a random arborescence rooted at vertex $r$ on strongly connected Eulerian digraph $G''$ as follows in \cref{alg:reduction}. We begin by selecting the vertex at which the arborescence is rooted, using the Markov chain tree theorem \cite{Ald90, Bro89}.
\begin{theorem}[The Markov chain tree theorem] Let $G=(V,E)$ be a weighted directed graph. Assume that the natural Markov chain associated with $G$ has stationary distribution $\pi=(\pi_1, \dots , \pi_n)$. Then we have the following relationship between $\pi$ and the arborescences of $G$: 
\[\pi_i = \frac{\sum\set{w(T)\given T \text{ arborescence rooted at $i$}}}{\sum\set{w(T)\given T \text{ arborescence}}},\]
where $w(T)$ is product of edge weights in $T$.
\end{theorem}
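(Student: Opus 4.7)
The plan is to exhibit an explicit left-invariant vector of the natural random walk's transition matrix $P_{ij} = w(i,j)/\deg(i)$ and invoke uniqueness of the stationary distribution on a strongly connected chain to identify it with $\pi$. The candidate will be built from the arborescence-generating function
\[ A_i \;:=\; \sum_{T \text{ arborescence rooted at } i} w(T). \]
Concretely, I would establish the combinatorial identity
\[ \deg(j)\, A_j \;=\; \sum_{i \in V} A_i\, w(i,j) \quad \text{for every } j \in V, \]
which rearranges into the statement that the vector $(A_i\deg(i))_i$ is left-fixed by $P$; since $\pi$ is the unique left-fixed probability vector (up to scaling), normalizing yields the claimed proportionality between $\pi_i$ and $A_i$ stated in the theorem.

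I would prove the identity by a weight-preserving bijection between the sets parameterizing the two sides, each of which is naturally in bijection with pairs consisting of a \emph{pseudo-arborescence} whose unique cycle passes through $j$ together with a distinguished edge of that cycle incident to $j$. Here a pseudo-arborescence is a digraph in which every vertex has out-degree exactly one, equivalently an arborescence plus one extra edge creating a unique cycle. The left-hand side arises by starting with an arborescence $T$ rooted at $j$ (so $j$ has out-degree zero) and adjoining an outgoing edge $(j,k)$; the cycle of $T \cup \{(j,k)\}$ then contains $j$ with $(j,k)$ as the marked outgoing edge. The right-hand side arises by starting with an arborescence $T'$ rooted at some $i$ and adjoining the edge $(i,j)$; the cycle now contains $i\to j$ with $(i,j)$ as the marked incoming edge. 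The bijection is ``cycle rotation'': in the pseudo-arborescence with cycle through $j$, the predecessor $i$ of $j$ in the cycle is uniquely determined, and removing $(i,j)$ from $T \cup \{(j,k)\}$ while re-rooting at $i$ recovers the right-hand datum. The weights agree because both presentations contribute the same product of edge weights, namely that of the pseudo-arborescence.

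The main obstacle is really only notational, namely confirming that the rotation is a bijection and that every pseudo-arborescence is counted with the correct multiplicity on each side, but these are routine checks. A cleaner non-combinatorial alternative would be to invoke Tutte's directed matrix-tree theorem, which identifies $A_i$ with a principal minor of the directed Laplacian $L = D - W$; since $\pi$ lies in the left kernel of a rescaled $L$ and $L$ has rank $n-1$ by strong connectivity, Cramer's rule immediately forces the entries of $\pi$ to be proportional to the principal cofactors of $L$, i.e.\ to $A_i$ up to the normalization used in the statement.
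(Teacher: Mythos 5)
The paper states this theorem and cites Aldous and Broder for it without supplying a proof, so there is no in-paper argument to compare against. Your plan is the standard one: exhibit an explicit left-eigenvector of $P_{ij}=w(i,j)/\deg(i)$ via cycle rotation on functional digraphs (or, equivalently, read it off Tutte's directed matrix-tree theorem and Cramer's rule). The identity $\deg(j)A_j=\sum_i A_i\,w(i,j)$ is correct, and your bijection proves it: both sides weight-enumerate spanning out-degree-one digraphs whose unique cycle passes through $j$, indexed on the left by the cycle out-edge at $j$ and on the right by the cycle in-edge at $j$.

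However, the concluding sentence does not follow from the identity you prove, and the conclusion as you phrase it is false in general. The identity rearranges to $\sum_i \bigl(A_i\deg(i)\bigr)P_{ij}=A_j\deg(j)$, so the vector $P$ fixes is $(A_i\deg(i))_i$, and normalization gives $\pi_i\propto A_i\deg(i)$, not $\pi_i\propto A_i$. These differ whenever out-degrees are nonuniform: on the two-vertex digraph with edges $(1,2)$ of weight $a$ and $(2,1)$ of weight $b\ne a$, the walk is a deterministic two-cycle, so $\pi=(1/2,1/2)$, yet $A_1=b\ne a=A_2$; note $A_1\deg(1)=A_2\deg(2)=ab$, exactly as your identity predicts. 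The resolution is that the classical Markov chain tree theorem weights an arborescence by $\prod_{(u,v)\in T}P_{uv}$, and since an arborescence rooted at $i$ omits precisely the out-edges of $i$, this product equals $\frac{\deg(i)}{\prod_u\deg(u)}\prod_{e\in T}w(e)$, which is the extra $\deg(i)$ factor your computation produces. So your derivation is right but your final line silently drops the $\deg(i)$ factor (as does the theorem statement in the paper, if one reads ``edge weights'' literally rather than as the row-normalized transition probabilities that the cited Markov chain tree theorem actually uses). Carry the factor through explicitly, or redefine $w(T)$ in the statement as $\prod_{(u,v)\in T}P_{uv}$; the same caveat applies to your ``up to the normalization'' aside in the Cramer's-rule alternative, where the rescaling is exactly by $\deg(i)$.
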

 Finding the stationary distribution of a random walk on some graph $G$ reduces to solving a system of $n$ linear equations and can be done in \NC{}. 

After the root $r \in V$ of an arborescence is selected, we can add outgoing edges from $r$ to graph $G$ to guarantee that $G$ is strongly connected. As no arborescence rooted at $r$ will contain these new edges, the addition of such edges does not affect the distribution of arborescences rooted at $r$. Then, we note that for any vertex $v$, if we multiply the weights of all edges in $\delta^-(v)$ by some constant, the distribution of arborescences rooted at any vertex does not change. We can therefore rescale edges to ensure that the resulting graph is Eulerian.

\begin{Algorithm}

\caption{Reduction to strongly-connected Eulerian graphs \label{alg:reduction}}

Compute the stationary distribution $\pi(v)$ of a random walk on $G$ \;
Choose vertex $r$ to be the root of the arborescence with probability $\pi(r)$\;
$G' = (V, E') \xleftarrow{} G$\;
\For{$v \in V \setminus \{r\}$ \textbf{\textup{in parallel}}}{
    \If{$e' = (r, v) \notin E$}{
        Add  $e'$ to $E'$ with weight $w'(e) = 1$\;
    }
}
Compute the stationary distribution $\pi'(v)$ of a random walk on $G'$\; 
$G'' = (V, E'') \xleftarrow{} G'$ \;
\For{$v \in V$}{
    \For{$e = (v, u) \in E''$}{
        $w''(e) \xleftarrow{} \frac{w'(e)}{\deg_{out}'(v)}\pi'(v)$ \;
    }
}
Randomly sample an arborescence rooted at $r$ from $G''$\;

\end{Algorithm}

\begin{proposition}
Graph $G''$ is Eulerian.
\end{proposition}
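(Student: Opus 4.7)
The plan is to verify the Eulerian condition vertex-by-vertex by computing the total outgoing and incoming weight at each $v\in V$ in $G''$ and showing both equal $\pi'(v)$. This is essentially a routine unfolding of definitions combined with the stationarity equation for $\pi'$, so the main task is bookkeeping rather than overcoming a conceptual obstacle.

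First I would compute the total outgoing weight at an arbitrary vertex $v$. By the construction of $G''$, each outgoing edge $e=(v,u)\in E''$ receives weight $w''(e)=\frac{w'(e)}{\deg'_{\text{out}}(v)}\pi'(v)$. Summing over all outgoing edges, the factor $\pi'(v)/\deg'_{\text{out}}(v)$ pulls out, leaving $\sum_{e=(v,u)} w'(e) = \deg'_{\text{out}}(v)$ in the sum (interpreting out-degree in $G'$ as the total outgoing edge weight, consistent with the use of $\deg_{\text{out}}'$ in the algorithm's normalization). The total outgoing weight at $v$ in $G''$ is therefore exactly $\pi'(v)$.

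Next I would compute the total incoming weight at $v$ in $G''$:
\[ \sum_{e=(u,v)\in E''} w''(e) \;=\; \sum_{u\,:\,(u,v)\in E''} \frac{w'(u,v)}{\deg'_{\text{out}}(u)}\,\pi'(u). \]
The quantity $w'(u,v)/\deg'_{\text{out}}(u)$ is precisely the transition probability $P'(u,v)$ of the natural random walk on $G'$. Therefore the incoming sum equals $\sum_u \pi'(u)\,P'(u,v)$, which by the defining stationarity equation $\pi' = \pi' P'$ is exactly $\pi'(v)$.

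Since the total incoming and outgoing weights at every vertex $v$ both equal $\pi'(v)$, the graph $G''$ satisfies the Eulerian condition at every vertex, completing the proof. The only subtlety worth flagging is ensuring that $\pi'$ is well-defined, which follows because $G'$ is strongly connected by construction (the explicit addition of edges $(r,v)$ for every $v\neq r$), so the associated Markov chain admits a unique stationary distribution.
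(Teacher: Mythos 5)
Your proof is correct and follows essentially the same route as the paper: both compute the total incoming and outgoing weight at each vertex in $G''$ and show each equals $\pi'(v)$, using the stationarity equation $\pi' = \pi' P'$ for the incoming side and the normalization of the reweighting for the outgoing side. Your added remark on the well-definedness of $\pi'$ via strong connectivity of $G'$ is a reasonable (if minor) point of care that the paper leaves implicit.
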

\begin{proof}
We have that, for all $v$ in $G'$
\[\sum_{(u, v) \in \delta^+(v)} \pi'(u)\frac{w'(u, v)}{\deg_{\text{out}}'(u)} = \pi'(v)\]
Then, considering $G'':$
\begin{align*}
\deg_{\text{in}}''(v) &= \sum_{(u, v) \in \delta^+(v)} w''(u, v) = \sum_{(u, v) \in \delta^+(v)} \frac{w'(u, v)\pi'(u)}{\deg_{\text{out}}'(u)} = \pi'(v)\\
&= \pi'(v) \sum_{(v, x) \in \delta^- (v)} \frac{w'(v, x)}{\deg_{\text{out}}'(v)} = \sum_{(v, x) \in \delta^- (v)} w''(v, x) = \deg_{\text{out}}''(v) \qedhere
\end{align*}
\end{proof}

Thus, in \RNC{}, the task of randomly sampling an arborescence in some digraph can be reduced to the task of randomly sampling an arborescence rooted at some vertex $r$ from a strongly connected Eulerian digraph.
	\section{Cycle Decomposition}
\label{sec:decomposition}
\newcommand{\T}{\mathcal{T}}
\newcommand{\sS}{\mathcal{S}}

Because \Cref{theo:aldousBroder} only needs the first-visit edges to produce a random arborescence, our main idea is to create a hierarchical clustering of the graph that we call $\T$. Every element $\sS \in$ $\T$ in the clustering is a subset of $V$ which is strongly connected by edges of relatively high weight compared to the edges in $\delta^-(\sS)$. Intuitively, a random walk will spend much time traversing edges inside a cluster, before venturing out. Aside from first visit edges, any other traversals do not need to be simulated, thus much of the work can be avoided by only simulating the first few edges which enter or exit a cluster. Consider the pseudocode in \cref{alg:decomposition} for generating one such decomposition:

\begin{Algorithm}
\caption{Generating a hierarchical decomposition \label{alg:decomposition}}

\For{edge $e \in G$ \textbf{\textup{in parallel}}}{
    Find a cycle $C_e$ of edges such that $e \in C_e$ and for every $e' \in C, w_{e'} \geq \frac{w_e}{m}$ using \cref{alg:find-cycle} \;
}

Sort the cycles found by minimum edge weight in decreasing order $C_1, C_2, \dots, C_m$ \;
$\mathcal{T} \xleftarrow{}  \{\}$\;
\For{i = 1, 2, \dots m \textbf{\textup{in parallel}}}{
    Find $\mathcal{S}_{i,1}, \mathcal{S}_{i,2} \dots \mathcal{S}_{i,j}$, the vertex sets of the connected components in $(V, C_1 \cup C_2 \cup \dots \cup C_i)$ \;
    Add all elements found to $\mathcal{T}$ \;
}

\For{$\mathcal{S}_{i,j} \in \mathcal{T}\setminus S_{m, 1}$ \textbf{\textup{in parallel}}}{
    Find the node of the form $\sS_{i+1, k}$ such that $\sS_{i, j} \subset \sS_{i+1, k}$ and assign $\sS_{i+1, k}$ as the parent of $\sS_{i, j}$
}
Contract duplicate nodes in $\mathcal{T}$\;

\end{Algorithm}

\begin{Algorithm}
\caption{\texttt{FindCycle}$(e = (e_0, e_1), G = (E, V))$}\label{alg:find-cycle}
Construct $G' = (V, E')$, where $E' = \{e' \in E: w_{e'} \geq \frac{w_e}{m}\}$\;

\For{$v_1, v_2 \in V$ \textbf{\textup{ in parallel}}}{
    \If{$(v_1, v_2) \in E'$}{$R(v_1, v_2, 1) \xleftarrow{} $True \;
    }
}

\For{$v \in V$ \textbf{\textup{in parallel}}}{
    $R(v, v, 0) \xleftarrow{}  $True\;
    $R(v, v, 1) \xleftarrow{}  $True\;
}

\For{$l \in \{2,4,  \dots 2^{\ln{n}}\}$}{
    \For{$v_1, v_2, v_3 \in V$ \textbf{\textup{ in parallel}}}{
        \If{$R(v_1, v_2, l/2)$ \textup{and} $R(v_2, v_3, l/2)$}{
            $R(v_1, v_3, l) \xleftarrow{}  $True \;
        }
    }
}
$c_0 \xleftarrow{}  e_1, c_{2^{\lceil \ln{n} \rceil}} \xleftarrow{}  e_0 $\;
\For{$l \in \{2^{\lceil \ln{n} \rceil - 1}, 2^{\lceil \ln{n} \rceil - 2}, \dots 1 \}$}{
    \For{$i \in [2^{\lceil \ln{n} \rceil}]$ \textup{such that $l$ is the largest power of $2$ which divides $i$} \textbf{\textup{ in parallel}}}{
        Set $c_i$ to be some vertex  such that $R(c_{ i - l}, c_i, l)$ and $R(c_{ i}, c_{i+l}, l)$ \;
    }
}
Prune the transcript of vertices $\{c_0, c_1, \dots, c_{2^{\lceil \ln{n} \rceil}}\}$ to remove any duplicates and return the resulting cycle. \;

\end{Algorithm}

We use \cref{alg:find-cycle} to construct a cycle of comparably high-weight edges containing a given edge $e$. The collection of these cycles is a ``loose decomposition'' of the Eulerian graph into cycles. While an Eulerian graph can be decomposed into cycles in general, we do not know if this can be done in \NC{}. Instead we find a collection of cycles whose sum and $\poly(m)$ times it sandwich the Eulerian graph.

For each edge $e$, \cref{alg:find-cycle} successfully returns a cycle since,
\begin{proposition}
In directed weighted Eulerian graph $G = (V, E)$ for any edge $e \in E$, there exists a cycle of edges $C_e$ such that $e \in C_e$ and for all $e' \in C_e, w_e' \geq \frac{w_e}{m}$
\end{proposition}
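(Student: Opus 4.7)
The plan is to derive this by a flow-conservation argument on a vertex cut, making essential use of the Eulerian hypothesis. Fix $e = (u,v)$ with weight $w_e$, and let $H = \{e' \in E : w_{e'} \geq w_e/m\}$ be the set of ``heavy'' edges relative to $e$; note $e \in H$ trivially, since $m \geq 1$. A cycle $C_e$ as required exists if and only if there is a directed path from $v$ to $u$ using only edges of $H$, because prepending $e$ to such a path closes it into a cycle. I will prove existence of this path by contradiction.

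Let $S \subseteq V$ be the set of vertices reachable from $v$ in the subgraph $(V, H)$. Under the contradiction assumption, $v \in S$ but $u \notin S$. By the maximality of $S$, every edge $(x,y) \in \delta^-(S)$ must satisfy $w_{(x,y)} < w_e/m$ strictly, since otherwise $y$ would be reachable from $v$ through $x$ via $H$. Summing over the at-most-$m$ edges in $\delta^-(S)$ gives
\[ \sum_{e' \in \delta^-(S)} w_{e'} < |\delta^-(S)| \cdot \frac{w_e}{m} \leq m \cdot \frac{w_e}{m} = w_e. \]
On the other hand, the edge $e = (u,v)$ itself lies in $\delta^+(S)$, so the total incoming weight satisfies $\sum_{e' \in \delta^+(S)} w_{e'} \geq w_e$.

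To close the argument, I invoke the standard fact that in any Eulerian weighted digraph, in-weight equals out-weight across every cut: summing the vertex-level identity $\sum_{e' \in \delta^+(v)} w_{e'} = \sum_{e' \in \delta^-(v)} w_{e'}$ over $v \in S$ cancels edges internal to $S$ on both sides, leaving $\sum_{e' \in \delta^+(S)} w_{e'} = \sum_{e' \in \delta^-(S)} w_{e'}$. Combining with the two bounds above yields $w_e \leq \text{in-weight} = \text{out-weight} < w_e$, a contradiction, so the desired path (and hence cycle) exists.

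I do not anticipate a genuine obstacle in this argument. The one detail to handle carefully is the choice of threshold $w_e/m$: it is the tightest threshold that makes the cut bound $|\delta^-(S)| \cdot (w_e/m) < w_e$ go through, and it relies on the strict per-edge inequality $w_{e'} < w_e/m$ outside $H$ (weak inequality would let the cut-sum equal $w_e$ and the contradiction would fail). The Eulerian assumption is clearly essential, since in a general digraph a heavy edge can enter a region $S$ with no comparable edges returning, so no such cycle need exist.
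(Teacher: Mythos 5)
Your proof is correct but takes a genuinely different route from the paper's. The paper proves the proposition via a weighted cycle decomposition: any Eulerian digraph decomposes into at most $m$ weighted cycles (each of uniform weight), and since edge $e$'s weight $w_e$ is distributed across the cycles that pass through it, at least one such cycle has weight $\geq w_e/m$; that cycle then has all edge weights $\geq w_e/m$ and contains $e$. Your argument instead works by cut contradiction: if no heavy path from $v$ back to $u$ exists, let $S$ be the set reachable from $v$ along heavy edges, observe that all edges leaving $S$ are strictly light so their total weight is strictly below $w_e$, while the incoming edge $e$ alone already contributes $\geq w_e$, contradicting the Eulerian cut identity $\sum_{\delta^+(S)} w = \sum_{\delta^-(S)} w$. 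Both arguments make the Eulerian hypothesis do the real work, but in different ways. The paper's decomposition-based proof is terser and directly yields the object (a concrete cycle through $e$), which aligns nicely with \cref{alg:find-cycle}; your cut argument is a bit more self-contained since it sidesteps appealing to the cycle-decomposition fact and only uses vertex-level flow conservation. One small presentational nit: your intermediate chain $\sum_{\delta^-(S)} w_{e'} < |\delta^-(S)| \cdot w_e/m$ reads as $0 < 0$ in the degenerate case $\delta^-(S) = \emptyset$; that case is already impossible (Eulerianness forces $\sum_{\delta^-(S)} = \sum_{\delta^+(S)} \geq w_e > 0$), but it is cleaner to note this or to state the final bound $\sum_{\delta^-(S)} w_{e'} < w_e$ directly.
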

\begin{proof}
Note that any Eulerian digraph can be decomposed into at most $m$ cycles where each cycle contains edges of uniform weight. Then, for any edge $e$, one of these cycles must have weight at least $\frac{w_e}{m}$. 
\end{proof}
Given that a cycle is found containing every edge in $G$, it follows that \Cref{alg:decomposition} successfully generates a hierarchical decomposition which satisfies: 
\begin{proposition} $\mathcal{T}$ is a laminar family, i.e. has the following properties:
\begin{enumerate}
    \item Every node $\mathcal{S}$ of $\mathcal{T}$ is a subset of $V$ and is the vertex set of a connected component of $G$.
    \item Children of any node in $\mathcal{T}$ are proper subsets of it.
    \item $\card{\mathcal{T}} \leq 2n.$

\end{enumerate}
\end{proposition}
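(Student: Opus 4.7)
The plan is to verify the three listed properties directly from the construction in \cref{alg:decomposition}, treating them essentially as bookkeeping claims about the sequence of connected components produced as cycles are added in decreasing-weight order. For property 1, observe that each $\mathcal{S}_{i,j}$ is by definition the vertex set of a connected component of the subgraph $(V, C_1\cup\cdots\cup C_i)$. Since every edge of every $C_k$ lies in $G$ and each $C_k$ is a directed cycle, the vertices of any such component are mutually reachable in $G$ via the cycles $C_1,\dots,C_i$, so the induced subgraph $G[\mathcal{S}_{i,j}]$ is strongly connected; this is the sense in which the node is a ``connected component of $G$.''

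For property 2, the key observation is monotonicity: $C_1\cup\cdots\cup C_i\subseteq C_1\cup\cdots\cup C_{i+1}$, so adding the cycle $C_{i+1}$ can only merge components, never split them. Consequently each $\mathcal{S}_{i,j}$ is contained in a unique $\mathcal{S}_{i+1,k}$, and the parent-assignment step of the algorithm always succeeds. The final deduplication pass contracts any chain of equal sets that arises when a new cycle merges nothing new (or merges only within one existing component), so after this contraction any parent pointer connects two genuinely distinct sets, which by containment must then be related by strict inclusion.

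For property 3, I would show that $\mathcal{T}$ is a laminar family and then apply the standard bound. Laminarity follows from monotonicity: for any two nodes $\mathcal{S}_{i,j}$ and $\mathcal{S}_{i',j'}$ with $i\le i'$, they are either disjoint or satisfy $\mathcal{S}_{i,j}\subseteq \mathcal{S}_{i',j'}$, since if they shared a vertex then at stage $i'$ the component containing that vertex would already have absorbed all of $\mathcal{S}_{i,j}$. A laminar family of nonempty subsets of an $n$-element set has at most $2n-1$ members; concretely, viewing $\mathcal{T}$ as a rooted forest via the parent pointers, the deduplication step ensures every internal node has at least two children (otherwise it would coincide with its unique child and get contracted), so with at most $n$ leaves (the minimal sets) the total node count is at most $2n-1<2n$.

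The main obstacle, such as it is, is making the deduplication step formally enforce both the strict-containment property and the at-least-two-children property simultaneously. This is not deep, but it is the one place where sloppiness would break the proof: without it, the same component can repeat across many consecutive indices $i$, destroying both property 2 and the $2n$ bound. Once this is pinned down, properties 1 and 2 fall out of the definition and property 3 is the standard laminar-family count.
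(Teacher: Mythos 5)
The paper states this proposition without proof, so there is no author argument to compare against; your proposal is the natural bookkeeping one, and it is essentially correct. Two small points are worth tightening. First, your justification for the at-least-two-children property — ``otherwise it would coincide with its unique child and get contracted'' — silently relies on the fact that the children of a node partition it. In a generic laminar family a node can have a single proper child without coinciding with it; what saves you here is the specific structure of \cref{alg:decomposition}: each component of $(V, C_1\cup\cdots\cup C_{i})$ is the \emph{disjoint union} of the components of $(V, C_1\cup\cdots\cup C_{i-1})$ that it absorbed, so if a node first appears at stage $i$ its post-deduplication children are exactly those stage-$(i-1)$ components, and they do partition it. You should state that partition property explicitly, since it is the only non-generic ingredient. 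Second, you do not actually need the $\geq 2$-children property at all: any laminar family of distinct nonempty subsets of an $n$-element set has at most $2n-1$ members by a short induction (restrict to the maximal sets and recurse), so once you have laminarity — which your monotonicity observation gives immediately — property 3 follows directly and the counting detour can be skipped.
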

Given this clustering $\T$, for any $\sS \in \T$, we say an edge $e \in E$ jumps between children of $\sS$ if $\sS$ is the lowest node in $\T$ which contains both endpoints of $e$. We call $e$ a \textit{jumping edge} of $\sS$.

In addition, we note that 

\begin{lemma}\label{lem:connectedness}
For any $\sS \in $ $\T$, let $w_{\text{max}}(\sS)$ be the maximum weight among the jumping edges of $\sS$. $G(\sS)$ is strongly connected by edges of weight $\frac{w_{\text{max}}(\sS)}{m}$. 
\end{lemma}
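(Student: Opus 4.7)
The plan is to trace through how $\sS$ arises in \cref{alg:decomposition}. Let $i$ be the earliest iteration at which $\sS$ appears as a connected component of $(V, C_1 \cup \cdots \cup C_i)$; because duplicates are contracted, the children of $\sS$ in $\T$ are then exactly the strictly smaller connected components of $(V, C_1 \cup \cdots \cup C_{i-1})$ that lie inside $\sS$. The first thing I would verify is that, even though the algorithm speaks of ``connected components'', any union of directed cycles has the property that weak and strong connectivity coincide: whenever an undirected path must traverse some arc $(a,b)$ in reverse, the remaining arcs of the cycle containing $(a,b)$ supply a directed detour from $b$ to $a$. Hence the edges of $C_1 \cup \cdots \cup C_i$ that lie inside $\sS$ already strongly connect $G(\sS)$.

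Next I would track weights. Since the cycles are sorted by minimum edge weight in decreasing order, every edge appearing in $C_1 \cup \cdots \cup C_i$ has weight at least $w_i := \min_{e \in C_i} w(e)$, so it suffices to show $w_i \geq w_{\max}(\sS)/m$. Let $e^*$ be a jumping edge of $\sS$ achieving the maximum $w(e^*) = w_{\max}(\sS)$, and let $i^*$ be the position, in the sorted list, of the cycle $C_{e^*}$ returned by \cref{alg:find-cycle} on input $e^*$. Because \cref{alg:find-cycle} only searches in the subgraph of edges of weight at least $w(e^*)/m$, every edge of $C_{e^*}$ has weight at least $w(e^*)/m$, and in particular $w_{i^*} \geq w_{\max}(\sS)/m$.

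The crux is to show $i^* \geq i$, from which the decreasing sort gives $w_i \geq w_{i^*} \geq w_{\max}(\sS)/m$ and finishes the argument. Since $e^*$ is a jumping edge of $\sS$, its two endpoints lie in distinct children of $\sS$, and hence in distinct connected components of $(V, C_1 \cup \cdots \cup C_{i-1})$. But $e^* \in C_{e^*}$, so once $C_{e^*}$ enters the construction its two endpoints immediately fall into the same component. If $i^* \leq i-1$ the endpoints of $e^*$ would therefore share a component already after iteration $i-1$, contradicting that $e^*$ jumps between children of $\sS$; hence $i^* \geq i$.

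The main potential subtlety is the weakly-versus-strongly connected issue handled in the first paragraph; everything else is clean bookkeeping about when each cycle enters the filtration, together with a single application of the decreasing sort on minimum edge weights.
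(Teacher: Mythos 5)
Your proof is correct and follows the same argument as the paper's---tracking the iteration at which $\sS$ first forms in the cycle filtration, observing that the cycle $C_{e^*}$ returned for a maximum-weight jumping edge must enter the filtration no earlier, and using the decreasing sort on minimum cycle weights to conclude that every edge in the filtration through iteration $i$ has weight at least $w_{\max}(\sS)/m$---merely presented directly rather than by contradiction. You additionally make explicit, and correctly justify, that in a union of directed cycles weak and strong connectivity coincide, a point the paper's proof leaves implicit even though it is needed to obtain the \emph{directed} path required by the definition of strong connectivity.
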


\begin{proof}
We prove by contradiction. Assume that there is some edge $e$ jumping between children of $\sS$ and two vertices $u, v \in \sS$ such that every path between $u, v$ contains an edge of weight less than $\frac{w_e}{m}$. Let the cycles which connect $\sS$ in \cref{alg:decomposition} be $C_1, \dots C_k$, with $C_k$ being the cycle with an edge of minimum weight. Then, consider the path of edges connecting $u$ to $v$ only containing edges in these cycles. The weight of every edge in the path must be greater than $w(C_k)$, the weight of the minimum weight edge in $C_k$. Thus by our assumption we must have that $\frac{w_e}{m} > w(C_k)$.

But now consider $C_e$, the cycle found containing edge $e$ with minimum edge weight $w(C_e) \geq \frac{w_e}{m}$. As $w(C_e) > w(C_k)$, this cycle must have been used to construct a child of $\sS$ in $\T$. This implies the $e$ is contained in a child of $\sS$ and is therefore not a jumping edge of $\sS$, a contradiction.
\end{proof}

\Cref{lem:connectedness} intuitively states that the nodes in $\T$ are tightly connected by edges of relatively high weight compared to incoming or outgoing edges. This will be critical in bounding the number of edges which need to be simulated.
	\section{Parallel Sampling}
\label{sec:sampling}

For clarity, a partially sampled random walk shall be represented as an alternating series of clusters and edges: $\sS_1, e_1, \sS_2, e_2, \dots$, where each $\sS_i \in \T$ and $e_i \in \delta^-(\sS_i) \cap \delta^+(\sS_{i+1})$. The sequence has a natural interpretation of a random walk in $\sS_i$ which then exits $\sS_i$ through $e_i$ to continue in $\sS_{i+1}$ before traversing $e_{i+1}$ and so on.

Once we have a hierarchical decomposition $\T$ found in \cref{sec:decomposition}, a natural way to use such a decomposition is to simulate a random walk as follows: 
\begin{enumerate}
    \item Simulate just the edges jumping between high-level clusters of a random walk, producing a transcript of the form $\sS_1, e_1, \sS_2, e_2, \dots$.
    \item For each $i$, recursively simulate a random walk on $\sS_i$ conditioned on exiting $\sS_i$ though edge $e_i$.
\end{enumerate}
While this recursive procedure as described is not in \RNC{}, using doubling tricks and memoization, we show how to perform this in \RNC{}. One can view this as a generalization of the doubling trick used by \textcite{Ten95}.

\subsection{Simulating Jumping Edges}
To simulate edges jumping across children of a node, we generalize the techniques of \cite{Ten95} to the following: 

\begin{theorem}
Suppose that $G = (V,E)$ is a directed graph, with weights given by $w:E \to \R_{\geq 0}$. Let $\sS \subseteq V$ be the disjoint union of $\sS_1, \sS_2, \dots \sS_k$, where $\sS$ and each $\sS_i$ is strongly connected.  The transcript of first $L$ edges jumping across different $\sS_i$ in a random walk starting from a given vertex $v_0 \in \sS$ and conditioned on exiting $\sS$ though some edge $e \in \delta^-(\sS)$ can produced in the \PRAM{} model using $\poly(\card{E},L)$ processors and $\polylog(\card{E},L)$ time.
\end{theorem}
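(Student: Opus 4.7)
The plan is to reduce the problem to sampling a trajectory in an auxiliary Markov chain of polynomial size whose single steps are the jumping edges themselves, and then to apply a Doob-style conditioning followed by Tetali's doubling trick to produce the first $L$ such steps in \RNC{}.

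First, I would define the \emph{jumping-edge chain}. Its states are the vertices of $\sS$ together with a single absorbing ``exited'' state $\ast$. From a state $v \in \sS_i$, I use \cref{lem:NCshortcutting} to compute in \NC{}, for every boundary edge $f \in \delta^-(\sS_i)$, the probability that the walk's first exit of $\sS_i$ is through $f$. If $f$ crosses into some $\sS_j$ with $j \neq i$, the transition labeled $f$ carries the chain to the endpoint of $f$; if $f$ leaves $\sS$ entirely, the transition labeled $f$ carries the chain to $\ast$. This yields a polynomial-size transition matrix $P$ together with an edge labeling of its transitions, all computable in \NC{}. Crucially, by the Markov property of the underlying walk, the distribution of the first $L$ jumping edges of the original walk agrees with the distribution of the edge labels read off from the first $L$ non-absorbing steps of this auxiliary chain.

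Next, to incorporate the conditioning on eventual exit through the given edge $e$, I would apply Doob's $h$-transform. Let $h(v)$ denote the probability that the jumping-edge chain started at $v$ is eventually absorbed via the transition labeled $e$. Standard absorbing-chain theory gives $h$ as the unique solution of a polynomial-size linear system (with one-step absorption probabilities via $e$ on the right-hand side), hence computable in \NC{}. The transformed chain $\tilde{P}(v \to u) = P(v \to u) \cdot h(u)/h(v)$ is a genuine Markov chain on the non-absorbing states, and its trajectories are distributed exactly as those of the original chain conditioned on absorption via $e$.

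Finally, I would sample the first $L$ steps of $\tilde{P}$ using Tetali's doubling trick. Inductively, for each level $k = 0, 1, \dots, \lceil \log_2 L \rceil$ and each vertex $v \in \sS$, I produce in parallel an independent sample $T_k(v)$ of the first $2^k$ steps of $\tilde{P}$ from $v$. To construct $T_{k+1}(v)$ from the level-$k$ samples, read the endpoint $u$ of $T_k(v)$ and concatenate $T_k(v)$ with $T_k(u)$; the Markov property makes the resulting sample correct in distribution. The final answer is $T_{\lceil \log_2 L \rceil}(v_0)$, truncated to $L$ edges or halted earlier upon absorption. The whole procedure has $O(\log L)$ levels, each computable in $\polylog(\card{E}, L)$ depth with $\poly(\card{E}, L)$ processors. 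The main obstacle is reconciling the doubling trick, which natively only handles unconditioned Markov chains, with the conditioning on a future event that might occur much more than $L$ steps in the future; Doob's $h$-transform is the key device that replaces that conditioning by an honest Markov chain at the cost of a single \NC{} linear solve, after which Tetali's machinery applies directly.
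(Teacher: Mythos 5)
Your overall strategy matches the paper's: turn the jumping edges into single steps of an auxiliary polynomial-size chain (the paper's $\texttt{Next}$ function, computed via \cref{lem:NCshortcutting}), then use a doubling trick to produce $L$ steps of that chain in polylogarithmic depth (the paper's $\texttt{End}$ function). Two things are worth separating out.

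On the conditioning, your treatment is more careful than the paper's. The paper disposes of the conditioning in one line --- ``consider a random walk on the graph induced by $\sS$ with an additional edge $e$ and an absorbing state at its endpoint'' --- and then runs the unmodified walk on that graph. Read literally, deleting the other boundary edges and renormalizing does \emph{not} produce the law of the original walk conditioned on exiting through $e$: one also has to reweight the internal transitions. Your Doob $h$-transform, with $h$ obtained from the same \NC{} linear solve used in \cref{lem:NCshortcutting} and $\tilde P(v\to u)=P(v\to u)\,h(u)/h(v)$, is exactly the device that makes ``run a walk on an honest Markov chain'' equivalent to ``run the original walk conditioned on absorption via $e$,'' so your version makes this step explicit and correct.

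There is, however, a genuine gap in your doubling step. You index the level-$k$ samples only by the starting vertex, $T_k(v)$, and set $T_{k+1}(v)=T_k(v)\cdot T_k(u)$ where $u$ is the endpoint of $T_k(v)$. If the partial trajectory returns to a vertex already used at the same level --- in particular if $u=v$, i.e.\ the walk comes back to its start after $2^k$ steps --- then the \emph{same} stored sample is spliced in again, so the continuation is a verbatim repeat of the previous segment rather than a fresh one. Walks revisit vertices with non-negligible probability, so this distorts the output law; ``the Markov property makes the resulting sample correct'' is only valid when the sample used for the second half is independent of the first half and of the random endpoint $u$. The paper avoids exactly this by indexing its analogue $\texttt{End}(v,t,\ell)$ by the time offset $t$ as well as the state, so that each underlying random seed $X_i$ is consumed exactly once when the recursion is unrolled. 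The fix is the same for you: make $T_k$ depend on a time slot as well as a vertex (equivalently, draw a single shared stream of seeds $X_1,\dots,X_L$ and let each $T_k$ be a deterministic function of the appropriate block), after which your argument goes through.
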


\begin{proof}
We denote this function $\texttt{JumpingEdges}(\sS, v_0, e, L)$. See \cref{alg:jumpingEdges} for pseudocode. Note that we can condition a walk on $\sS$ to exit via some edge $e \in \delta^-(\sS)$ by considering a random walk on the graph induced by $\sS$ with an additional edge $e$ and an absorbing state at the endpoint of $e$. For any vertex $v \in \sS$, let $\sS_v$ denote the unique $\sS_i$ which contains $v$.

From \cref{lem:NCshortcutting}, for any $v \in \sS$, the probability that a random walk currently at $v$ exits $\sS_v$ though each edge in $\delta^-(\sS_v)$ can be computed in \NC{}. Thus, after using $\poly(\card{E},L)$ processors and $\polylog(\card{E},L)$ time to compute all such probabilities for every vertex $v \in \sS$, we can assume access to a deterministic function \texttt{Next}$(v, X)$ which takes a vertex $v \in \sS$ and number $X \in [0,1]$ and outputs and edge $e \in \delta^-(\sS_v)$ such that if $X$ is chosen uniformly from $[0,1]$, then the probability that \texttt{Next}$(v, X)$ returns some edge $e \in \delta^-(\sS_v)$ is the probability a random walk currently at $v$ exits $\sS_v$ through edge $e$.

Given this function, if we generate independent uniformly random numbers $X_1,. . .,X_L \in [0, 1]$, the transcript of the jumping edges in random walk can be extracted by starting at $v=v_0$ and repeatedly applying \texttt{Next}$(v,X_i)$ to get the next vertex $v$, for $i \in [L]$. Of course, a naive implementation does not leverage parallelism, so we instead use a doubling trick.

We compute the values \texttt{End}$(v,t, l)$ which would be the final edge jumping between children of $\sS$ when we run a random walk starting at $v$, and applying \texttt{Next} with random inputs $X_t,X_{t+1},. . .,X_{t+l-1}$. For $l=1$, these are 1-step random walks and we use \texttt{Next} to compute them. For simplicity, we allow the first argument of $\texttt{End}$ to also be an edge, in which case the random walk starts at the ending vertex of that edge. Then we use the following identity, which allows us to compute \texttt{End} values for a particular $l$ from \texttt{End} values for $\frac{l}{2}$.
\[\texttt{End}(v, t, l) = \texttt{End}(\texttt{End}(v, t, l/2), t, l/2)\]

This is because we can break an $l$-step random walk into two $l/2$-step random walks. So we can compute all \texttt{End} values when $l$ ranges over powers of 2, in $\log(L$) steps.

Finally to compute the edge taken by the random walk at any time $l=1,. . .,L$, we simply write down $l$ as a sum of powers of 2, and repeatedly use the \texttt{End} function to compute the $l$-th edge of the random walk which jumps between children of $\sS$. 
\end{proof}

\begin{Algorithm}[h]

\caption{\texttt{JumpingEdges}$(\sS, v_0, e_{end}, L)$ \label{alg:jumpingEdges}}

\For{$t \in \{1, 2, \dots L\}$ \textbf{\textup{in parallel}}}{
    Let $X_t$ be a uniformly random number sampled from $[0,1]$ \;
    \For{$v \in S$ \textbf{\textup{in parallel}}}{
        $\texttt{End}(v, t, 1) \xleftarrow{}  \texttt{Next}(v, S)$\;
    }
}
\For{$l \in \{2, 4, 8, \dots, 2^{\lfloor \ln L \rfloor }\}$}{
    \For{$v \in S, t \in \{0, 1, \dots L - 1 - l\}$ \textbf{\textup{in parallel}}}{
        
        \texttt{End}$(v,t, l)$ $\xleftarrow{}$   \texttt{End}$(\texttt{End}(v,t, \frac{l}{2}),t, \frac{l}{2}) $ \;
        
    }
}
\For{$l \in {1, 2, \dots L}$ \textbf{\textup{in parallel}}}{
    Decompose $l$ as a sum of subset $S \subset \{1, 2, \dots 2^{\lfloor \ln L \rfloor - 1} \}$\;

    $e \xleftarrow{} (*, v_0)$\;
    $t \xleftarrow{} 0$\;
    
    \For {$s \in S$}{
        $e \xleftarrow{}  \texttt{End}(e, t, s)$ \;
        $t \xleftarrow[]{} t + s$\;
    }
    $e_l \xleftarrow{} e$\;
    $v_l \xleftarrow{}$ the ending vertex of $e_l$ 
}
\If{$e_k = e_{\text{end}}$ \textup{for some} $k$}{
    return $\sS_{v_0}, e_1, \sS_{v_1}, e_2, \dots \sS_{v_{k-1}}, e_k = e_{\text{end}}$\;
}
\Else{
    return $\sS_{v_0}, e_1, \sS_{v_1}, e_2, \dots, \sS_{L-1}, e_L, \sS, e_{\text{end}}$\;
}

\end{Algorithm}

\begin{remark}
For clarity of exposition, we allow sampling numbers from $[0, 1]$. However sampling a number with polynomially many bits is enough for our purposes. In our algorithms, we only need to compare our samples $X_t$ with deterministic numbers derived from the input. To this end, we can simply sample the first $N$ digits of the binary expansion of $X_t$ for some large $N$. The probability that the comparison of $X_t$ and a fixed number is not determined from the first $N$ digits is $2^{-N}$. By taking $N$ to be polynomially large, the probability of failure will be bounded by an exponentially small number. If one insists on avoiding even this small probability of failure, we can continue sampling digits of $X_t$ whenever we run into a situation where the first $N$ are not enough.  Since the probability of running into failure is very small, the overall expected running time still remains polylogarithmic.
\end{remark}

\subsection{Extracting First-Visit Edges}

For a cluster, once we know which edges jump across its children, we can recursively fill in the transcript of the walk. This is because we now have a similar subproblem for each child node, where we have a starting vertex, and a prespecified exit edge. However, a naive implementation of this is not in \RNC{}.

Instead our strategy is to again use a doubling trick.  Consider a node $\sS$ of $\T$ and a vertex $v \in \sS$.  Using the algorithm \texttt{JumpingEdges}, we can extract the edges that jump across children of $\sS$. We can then extend these to include edges that jump across children of children of $S$ by more applications of \texttt{JumpingEdges}. We will construct a function \texttt{AllEdges}$(S,v,e,L,l)$ that besides the arguments to \texttt{JumpingEdges} also takes an integer $l \geq 1$.  The goal of this function is to extract from the transcript of a random walk started at $v$ and conditioned on exiting $\sS$ through $e$, the first $L$ edges which jump between descendants of depth at most $l$ below $\sS$ in $\T$. Then, each intermediate node in the transcript is either an individual vertex, an $l$th descendant of $\sS$, or a cluster for with $L$ edges jumping between its children have already been found. We will show that for polynomially large $L$, this transcript will contain all first-visit edges with high probability. 

A natural approach for computing \texttt{AllEdges} is as follows. We already know how to compute \texttt{AllEdges} for $l=1$; that is just a call to \texttt{JumpingEdges}. Once the value of \texttt{AllEdges} has been computed for all settings of parameters for a particular $l$, we can compute it for $2l$ by the following procedure:

\begin{enumerate}
    \item  Let the transcript returned by \texttt{AllEdges}$(S,v,e,L,l)$ be $\sS_1, e_2,\sS_2, e_2, \dots , \sS_k,e_k$.
    \item For each intermediate node $\sS_i$ that is not a single vertex, if the transcript does not already contain $L$ edges jumping between children of $\sS_i$, let $u$ be the endpoint of $e_{i-1}$ and replace $\sS_i, e_i$ with \texttt{AllEdges}$(\sS_i,u,e,L,l)$
    \item Trim the transcript to only contain at most the first $L$ edges jumping between children of any node
\end{enumerate}

There is one key problem with this approach. We cannot precompute \texttt{AllEdges}$(\cdot,\cdot,\cdot,\cdot,l)$ and use the same precomputed answer to fill in the gaps for larger values of $l$. Each time we need to use \texttt{AllEdges} for a particular setting of its input parameters, we really need to use fresh randomness; otherwise the transcript we extract will not be from a true random walk.

We resolve this by using a caching trick. Instead of computing \texttt{AllEdges}$(S,v,e,L,l)$ for each setting of parameters once and reusing the same output for larger subproblems, we compute $M$ possible answers for a large enough $M$ and store all $M$ answers.  For larger subproblems, every time that we need to use the values of a smaller subproblem, we randomly pick one of the stored $M$ answers.  We will inevitably reuse some of the answers in this process; however when we restrict our attention to the unraveling of a particular answer for a subproblem there is a high probability of not having reused any answers. We will formalize this in \cref{lem:birthday}.

Pseudocode for computing \texttt{AllEdges} can be found in \cref{alg:allEdges}. In the end we use \\ \texttt{AllEdges}$(V,v, \emptyset, L,\card{V})$ to extract a list of edges, and with high probability all first-visit edges will be among this list

\begin{Algorithm}

\caption{Computing the random walk transcript extracts \label{alg:allEdges}}

\For{$\mathcal{S} \in \mathcal{T}, e \in \delta^-(\mathcal{S})\cup \{\emptyset\}, v \in \mathcal{S}, i \in [M]$ \textbf{\textup{in parallel}}}{
    
    \texttt{AllEdges}$(\mathcal{S}, v, e, L, 1)[i] \xleftarrow{}  \texttt{JumpingEdges}(S, v, e, L)$ \;

}
\For{$l \in \{1, 2, \dots 2^{\lceil \lg L \rceil-1}\}$}{
    \For{$\mathcal{S} \in \mathcal{T}, e \in \delta^-(\mathcal{S})\cup \{\emptyset\}, v \in \mathcal{S}, i \in [M]$ \textbf{\textup{in parallel}}}{
        \If{\normalfont{$l$ is greater than depth of the deepest child of $\mathcal{S}$}}{
            \texttt{AllEdges}$(\mathcal{S}, v, e, L, 2l)[i] \xleftarrow{}  \texttt{AllEdges}(\mathcal{S}, v, e, L, l)[i]$\;
        }
        \Else{
            Let $\sS_1, e_1, \sS_2, e_2, \dots \sS_k$ be the output of $\texttt{AllEdges}(\mathcal{S}, v, e, L, l)[i]$\;
            \For{$j \in [k]$ \textbf{\textup{in parallel}}}{
                \If{\normalfont{the transcript does not already contain $L$ edges jumping between children of $\sS_j$}}{
                    Replace $S_i, e_i$ with a randomly chosen instance of  \texttt{AllEdges}$(\mathcal{S}_j, v', e_j, L, l)$ where $v'$ is the endpoint of $e_{j-1}$ \;
                    
                }
            }
            Trim the resulting transcript to only contain at most the first $L$ edges jumping between children of any node. Save the resulting transcript as a solution to
            \texttt{AllEdges}$(\mathcal{S}, v, e, L, 2l)[i]$\;
        }
    }
}
\If{\normalfont{the final transcript of \texttt{AllEdges}$(V, v, \emptyset, L, \card{V})$} contain multiple subsequences which depend on the same call to $\texttt{JumpingEdges}$}{
    Replace all but the first subsequence with a freshly sampled transcript\;    
}

\end{Algorithm}

\begin{lemma} \label{lem:birthday}
For $M = \poly(L)$, when we unravel the recursion tree for the computation of the value \texttt{\textup{AllEdges}}$(V,v,\emptyset,L,\card{V})$, no stored answer of \texttt{\textup{AllEdges}} for any of the subproblems will be used more than once with high probability.
\end{lemma}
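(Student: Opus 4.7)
The plan is to apply a birthday-paradox argument on the number of $\texttt{JumpingEdges}$ invocations that actually appear as subsequences in the final transcript of $\texttt{AllEdges}(V, v, \emptyset, L, \card{V})$. Each such invocation is labeled by a tuple $(\mathcal{S}, v', e', i)$, where $(\mathcal{S}, v', e')$ identifies the subproblem and $i \in [M]$ is the stored instance index that was sampled when the invocation was fetched. The lemma's conclusion is equivalent to saying that all of these labels are distinct; a collision between two labels would correspond to two subsequences of the final transcript sharing the same random sample.

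The main combinatorial step is to bound $N$, the number of live $\texttt{JumpingEdges}$ invocations appearing in the final transcript, by a polynomial in $n$ and $L$. This is where the trimming rule in \cref{alg:allEdges} does the heavy lifting: after trimming, each cluster $\mathcal{S} \in \mathcal{T}$ contributes at most $L$ jumping edges between its children to the final transcript. Combined with $\card{\mathcal{T}} \leq 2n$ from \cref{sec:decomposition}, the final transcript contains at most $2Ln$ jumping edges in total. Every live invocation contributes at least one edge (otherwise its entire contribution was trimmed away and it is, by definition, not live), so $N \leq 2Ln$.

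For the union bound, each live invocation's stored index was drawn independently and uniformly from $[M]$. Two distinct invocations can only collide if they target the same subproblem $(\mathcal{S}, v', e')$, in which case the collision probability is exactly $1/M$. Union-bounding over the $\binom{N}{2} = O(L^2 n^2)$ pairs, the total probability of any collision is at most $O(L^2 n^2 / M)$. Since $L$ is polynomial in $n$ in our setting, setting $M = L^c$ for a sufficiently large constant $c$ drives this below $1/\poly(n)$, which is exactly what ``with high probability'' requires.

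The main obstacle is the first step. A naive accounting of the recursion would give $N$ as large as $L^{\Theta(\log L)}$, because the $O(\log L)$ doubling levels each branch by up to $L$ random picks per node, creating a potentially super-polynomial call tree of subproblem lookups during the recursive computation. The polynomial bound on $N$ crucially relies on the observation that only lookups whose output edges survive trimming can appear as subsequences of the final transcript; the per-cluster cap of $L$ jumping edges converts the quasi-polynomial naive call tree into a polynomially bounded set of live invocations, and carefully separating ``live'' from ``discarded'' picks is the technical heart of the argument.
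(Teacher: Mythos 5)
Your proof takes essentially the same approach as the paper's: both arguments hinge on using the trimming rule in \cref{alg:allEdges} to show that the number of $\texttt{JumpingEdges}$ answers actually consumed when unraveling the final transcript is polynomial, after which a union bound over collisions with $M = \poly(L)$ finishes the job. You also correctly identify that the naive recursion tree is quasi-polynomial and that the per-cluster cap is what rescues the count.

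The bookkeeping differs slightly. The paper fixes a subproblem $(\sS, v, e)$ and argues that at most $L$ live calls to $\texttt{JumpingEdges}(\sS, v, e, L)$ can appear, because each such call produces a subsequence ending at a distinct occurrence of $e$ in the final transcript and $e$ appears at most $L$ times; the union bound is then over the $\poly(n)$ argument tuples. You instead aggregate globally, arguing $N \le 2Ln$ total live invocations by charging each to an edge it contributes, then union-bounding over $\binom{N}{2}$ pairs. Your global bound is stated a bit loosely: an invocation of $\texttt{JumpingEdges}(\sS, v, e, L)$ can produce a trivial output (the walk exits $\sS$ immediately, emitting only the exit edge $e$), and that exit edge is a jumping edge of an \emph{ancestor} of $\sS$ which is shared with the ancestor's own $\texttt{JumpingEdges}$ call and with the chain of descendants exiting through the same occurrence. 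So charging ``one new edge per live invocation'' can overlap, and the right global bound is more like $O(Ln \cdot \mathrm{depth}(\T))$ than $2Ln$. This is still polynomial, so your conclusion is unaffected; the per-subproblem accounting in the paper sidesteps this overlap cleanly by tying each call to a distinct occurrence of its own fixed exit edge $e$.
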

\begin{proof}
Note that for any any $\sS \in T, v \in \sS, e \in \delta^-(\sS)$, the recursion tree for an answer to \texttt{\textup{AllEdges}}$(V,v,\emptyset,L,\card{V})$ will rely on at most $L$ calls to $\texttt{JumpingEdges}(\sS, v, e, L)$ since each call to $\texttt{JumpingEdges}(\sS, v, e, L)$ corresponds to a subsequence of the final transcript which ends in $e$, and $e$ can only occur $L$ times in the transcript by the definition of $\texttt{AllEdges}$. As each sample of $\texttt{JumpingEdges}(\sS, v, e, L)$ is chosen uniformly at random from $M$ samples, for polynomially large $M$, the probability of the same sample being chosen twice can be made small and of the form $\frac{1}{\poly(n)}$. Lastly via a union bound over polynomially many possible combinations of arguments for \texttt{JumpingEdges}, we have that for polynomially large $M$, which high probability no stored answer of \texttt{JumpingEdges} or \texttt{AllEdges} will be used more than once. 
\end{proof}

Putting everything together, we have shown that:
\begin{theorem}\label{lem:polyAllEdges}
Suppose that $G = (V,E)$ is a directed graph, with weights given by $w:E \xrightarrow{} \R_{>0}$. Let $\T$ denote a hierarchical decomposition of $G$. The transcript of a random walk starting at $v_0 \in V$ which contains first $L$ edges jumping between children of $\sS$ for any $\sS \in \T$ can produced in the \PRAM{} model using $\poly(\card{E},L)$ processors in  $\polylog(\card{E},L)$ expected time.
\end{theorem}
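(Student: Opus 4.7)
The plan is to run \cref{alg:allEdges} on input $(V, v_0, \emptyset, L, \card{V})$ and analyze both its parallel complexity and its output distribution. The argument has three parts: (i) bound the number of subproblems and the per-round cost, (ii) invoke \cref{lem:birthday} to argue that cached answers behave as if independent in the unraveled output, and (iii) appeal to the strong Markov property to justify stitching across the hierarchy.

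For complexity, the outer loop on $l$ doubles each iteration, so it has $O(\log \card{V})$ rounds. The parallel-for ranges over tuples $(\sS, v, e, i)$ with $\card{\T} \leq 2n$, $v \in \sS$, $e \in \delta^-(\sS) \cup \{\emptyset\}$, and $i \in [M]$ for $M = \poly(L)$; this totals $\poly(\card{E}, L)$ tuples. Each stored transcript has length at most $L$, and the replacement step is a parallel batch of constant-time lookups into the previously computed cache. The base case $l=1$ is a call to \texttt{JumpingEdges}, which by the earlier theorem uses $\poly(\card{E}, L)$ processors in $\polylog(\card{E}, L)$ time. Summing across the $O(\log n)$ outer rounds yields the desired polylog parallel time and polynomial processor bound per attempt.

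For correctness, I would argue by induction on $l$ (a power of $2$) that each stored copy of $\texttt{AllEdges}(\sS, v, e, L, l)$ is distributed as a random walk on $\sS$ started at $v$ and conditioned on first-exiting $\sS$ through $e$, truncated to contain at most the first $L$ edges jumping between children of every cluster of depth at most $l$ below $\sS$. The base case $l = 1$ is precisely the guarantee of \texttt{JumpingEdges}. The inductive step uses the strong Markov property: once the walk enters a child via a specified edge, its future behaviour is an independent random walk on that child conditioned on the next exit edge, which is exactly what an independent depth-$l$ subcall produces. The caching trick threatens this independence, but \cref{lem:birthday} ensures that with probability $1 - 1/\poly(n)$, no cache slot is used more than once in the full unraveling of $\texttt{AllEdges}(V, v_0, \emptyset, L, \card{V})$. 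On the complementary event, the final resample step of \cref{alg:allEdges} replaces the offending subsequences with fresh samples; by a geometric argument the expected number of retries is $O(1)$, preserving the polylog expected running time.

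The main obstacle I anticipate is the interplay between caching and the inductive correctness claim. Conditioned on the no-collision event of \cref{lem:birthday}, every \texttt{AllEdges} invocation encountered in the unraveling uses a distinct cache entry and hence effectively fresh randomness, so the assembled transcript is a bona fide sample from the target conditional distribution. The cleanest formalization is to couple the cached execution with an idealized version that draws truly fresh randomness at every recursive call, observe that the coupling succeeds on the no-collision event, and handle the low-probability failure via the final resampling step. Once this coupling is in place, the three ingredients above combine to give the theorem.
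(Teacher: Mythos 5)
Your proposal is correct and follows essentially the same approach as the paper, which presents this theorem as a ``putting it together'' summary of Section~5: run \cref{alg:allEdges}, account for the polynomially many $(\sS,v,e,i)$ subproblems and the $O(\log)$ doubling rounds in $l$, use the \texttt{JumpingEdges} theorem for the base case, and invoke \cref{lem:birthday} together with the final resampling step to handle cache collisions. Your explicit coupling between the cached execution and an idealized fresh-randomness execution on the no-collision event is a useful elaboration of what the paper leaves implicit, but it is the same route with the same ingredients.
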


	\section{Hierarchical Exploration Time}
\label{sec:exploration}
The last element needed to obtain an \RNC{} algorithm for the random sampling of arborescences is to show that for the decomposition achieved above, for polynomially large $L$, simulating the first $L$ edges which jump across each node $\sS \in  \T$ will contain all first-visit edges to each vertex with high probability. We do this by bounding the number of edge traversals between children of each node before that node is covered. We build on techniques developed by \textcite{BPS18}.

We start by bounding the number of times a given vertex is visited before cover time. For any $v, s, t \in V$, let $H_v(s, t)$ denote the expected number of times a random walk starting at $s$ visits $v$ before reaching $t$; in the case that $s = t$, $H_v(s,s)$ denotes the number of times a walk starting at $s$ reaches $v$ before returning to $s$.

It is easy to see that $H_v$ satisfies a triangle inequality, namely:

\begin{proposition}\label{prop:triangle}
On any directed graph $G = (V, E)$, for any $v, s, t, u , \in V$: $H_v(s, t) \leq H_v(s, u) + H_v(u, t)$.
\end{proposition}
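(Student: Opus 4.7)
The plan is to prove the inequality by a pathwise comparison using the strong Markov property, splitting on whether the random walk started at $s$ hits $u$ before it hits $t$. Let $T_a = \inf\{n \geq 0 : X_n = a\}$ for $a \neq s$, and write $N_v(s \to a)$ for the (random) number of visits to $v$ strictly before $T_a$, so that $H_v(s,a) = \E[N_v(s \to a)]$. The claim will then reduce to controlling $\E[N_v(s \to t)]$ by $\E[N_v(s \to u)] + \E[N_v(u \to t)]$ along the two possible orderings of $T_u$ and $T_t$.

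The main pathwise inequality I would establish is
\[ N_v(s \to t) \;\leq\; N_v(s \to u) \;+\; \mathbf{1}[T_u < T_t]\sum_{n=T_u}^{T_t-1}\mathbf{1}[X_n = v]. \]
On the event $T_u \geq T_t$ the interval $[0,T_t)$ is contained in $[0,T_u)$, so $N_v(s \to t) \leq N_v(s \to u)$ and the inequality holds with the correction term equal to $0$. On the complementary event $T_u < T_t$, the interval $[0,T_t)$ decomposes as $[0,T_u) \sqcup [T_u, T_t)$, and the first piece contributes exactly $N_v(s \to u)$ while the second is precisely the indicated sum. Taking expectations gives
\[ H_v(s,t) \;\leq\; H_v(s,u) \;+\; \E\!\left[\mathbf{1}[T_u < T_t]\sum_{n=T_u}^{T_t-1}\mathbf{1}[X_n=v]\right]. \]

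To handle the remaining expectation I would invoke the strong Markov property at the stopping time $T_u$: on $\{T_u < T_t\}$ the trajectory $(X_{T_u + j})_{j \geq 0}$ is a random walk started at $u$ independent of $\mathcal{F}_{T_u}$, and on this event its first hitting time of $t$ equals $T_t - T_u$. Hence the conditional expectation of the sum equals $H_v(u,t)$, and dropping the indicator only increases things, yielding the desired bound. The potentially delicate point — which is really the only obstacle — is making the strong Markov argument rigorous when $u$ or $t$ might not be reachable with probability one; however, the statement is used later only for strongly connected (indeed Eulerian) subgraphs, so every relevant hitting time is a.s.\ finite and this causes no trouble. Finally, the degenerate cases $s=t$, $s=u$, or $u=t$ follow from the same decomposition with the convention $T_s = \inf\{n \geq 1 : X_n = s\}$ in the self-return case, or trivially from nonnegativity of $H_v$ when two of the arguments coincide.
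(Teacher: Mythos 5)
Your proof is correct. The paper states \cref{prop:triangle} without proof, treating it as immediate, and your argument is exactly the natural way to fill in the details: decompose the trajectory on whether $T_u < T_t$, observe that on one branch $[0,T_t)\subseteq[0,T_u)$ and on the other $[0,T_t) = [0,T_u)\sqcup[T_u,T_t)$, and then apply the strong Markov property at $T_u$ to identify the conditional expectation of the tail contribution with $H_v(u,t)$; dropping $\P(T_u<T_t)\le 1$ gives the bound. The handling of the degenerate cases ($s=t$ via the return-time convention, $s=u$ or $u=t$ via nonnegativity) is also correct. One small remark: the hedge about hitting times possibly being infinite is not actually needed. The pathwise inequality $N_v(s\to t)\le N_v(s\to u)+\mathbf{1}[T_u<T_t]\sum_{n=T_u}^{T_t-1}\mathbf{1}[X_n=v]$ holds with values in $[0,\infty]$ even when $T_u$ or $T_t$ is infinite (on $\{T_u<T_t\}$ one necessarily has $T_u<\infty$, so the strong Markov step is still available), and taking expectations in $[0,\infty]$ preserves it; so the proposition holds for arbitrary directed graphs exactly as stated, with no finiteness assumption required.
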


Additionally, as the stationary distribution of vertices on an Eulerian graph is proportional to the degree of a vertex, it follows that:

\begin{proposition}\label{prop:timeBetweenReturns}
On any directed Eulerian graph $G = (V, E)$, for any $v,s\in V$: $H_v(s, s) = \frac{1}{\pi(s)}\pi(v) = \frac{\deg(v)}{\deg(s)}$
\end{proposition}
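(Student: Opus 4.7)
The plan is to show the first equality $H_v(s,s) = \pi(v)/\pi(s)$ via the classical identification of the excursion occupation measure as an invariant measure of the chain, and then specialize $\pi$ in the Eulerian case to obtain the degree formula.

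First, I would set $\mu(v) := H_v(s,s)$ and verify that $\mu$ is left-invariant for the transition kernel $P(u,v) = w(u,v)/\deg_{\text{out}}(u)$ of the random walk on $G$. The idea is to decompose each visit to $v$ during an excursion from $s$ back to $s$ according to its immediate predecessor vertex. Summing the contributions over predecessors, and handling the boundary terms at $v = s$ carefully (the walk occupies $s$ exactly once during the excursion, at time $0$), yields the identity $\mu(v) = \sum_u \mu(u) P(u,v)$ for every $v$, together with $\mu(s) = 1$. This is a textbook cycle/renewal argument; the delicate point is only the bookkeeping at $v = s$ to confirm $H_s(s,s) = 1$ under the convention stated in the excerpt.

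Next, since $G$ is strongly connected, the random walk is irreducible and its invariant measure is unique up to scaling. Hence $\mu = c \pi$ for some $c > 0$, and evaluating at $s$ pins down $c = \mu(s)/\pi(s) = 1/\pi(s)$, giving $H_v(s,s) = \pi(v)/\pi(s)$, which is the first equality in the proposition.

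Finally, for the Eulerian specialization: the vector $v \mapsto \deg(v)$ is itself a left-invariant vector for $P$, because the Eulerian condition yields
\[ \sum_u \deg(u)\, P(u,v) = \sum_{(u,v) \in E} w(u,v) = \deg_{\text{in}}(v) = \deg(v). \]
By uniqueness of the stationary distribution, $\pi(v) \propto \deg(v)$, and therefore $\pi(v)/\pi(s) = \deg(v)/\deg(s)$, completing the second equality. The main obstacle, such as it is, lies only in Step 1: one has to be careful with the boundary accounting so that the initial visit to $s$ at time $0$ and the terminating return to $s$ balance out correctly in the stationarity calculation. No deeper difficulty is expected, since this identity is a standard consequence of the regenerative structure of excursions on an irreducible Markov chain.
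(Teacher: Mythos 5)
Your proof is correct, and it takes a genuinely different route from the paper. You invoke the classical excursion/renewal identity: the occupation measure $\mu(v) = H_v(s,s)$ of an excursion from $s$ is a left-invariant measure for the transition kernel, and by irreducibility it is the unique such measure up to scaling, so $\mu = \pi/\pi(s)$; you then specialize to the Eulerian case by observing that the degree vector is itself invariant, giving $\pi \propto \deg$. The paper instead takes the Schur complement of $G$ onto the two-vertex set $\{v,s\}$, uses \cref{prop:eulerianSchure} to conclude this small graph is Eulerian with the same degrees at $v$ and $s$ (hence $w(s,v) = w(v,s)$), and solves the resulting $2\times 2$ system of one-step equations directly to obtain $H_v(s,s) = \deg(v)/\deg(s)$. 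Your argument is more classical and actually establishes the first equality $H_v(s,s) = \pi(v)/\pi(s)$ in the generality of any irreducible chain before specializing; the paper's argument stays entirely inside the Schur-complement machinery it has already set up (and will reuse in \cref{lem:clusterExploration}), and avoids appealing to the uniqueness of invariant measures. The only place you need care, as you note, is the boundary accounting at $v = s$; with the paper's convention for $H_v(s,s)$ this works out to $\mu(s) = 1$, matching $\deg(s)/\deg(s)$.
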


\begin{proof}
Consider the Schur complement $G_S$ where $S = \{v, s\}$. Since $G$ is Eulerian, so is $G_{S}$ and the degrees of $v, s$ are the same in $G$ and $G_S$. Letting the weights of edges in $G_S$ be $w(v, s)$, $w(s, v)$, $w(v, v)$, $w(s, s)$, we must have $w(s, v) = w(v, s)$ for $G_S$ to be Eulerian. As random walks on $G_S$ are distributed like random walks on $G$ whose transcripts have been restricted to only contain vertices in $S$,  the values of $H_v(s, s), H_v(v, s)$ are the same for a walk on $G$ as for one on $G_S$. Considering one-step transitions, we can construct the following system of equations:
\begin{align*}
H_v(s, s) &= \frac{w(s, v)}{\deg(s)} \left ( H_v(v, s) +1\right) + \frac{w(s,s)}{\deg(s)}0,\\
H_v(v, s) &= \frac{w(v, v)}{\deg(v)} \left ( H_v(v, s) +1\right) + \frac{w(v,s)}{\deg(v)}0.
\end{align*}
Solving yields $H_v(v, s) = \frac{w(v, v)}{w(v, s)}$ and $H_v(s, s) =  \frac{w(s, v)}{\deg(s)} \left (\frac{w(v, v)}{w(v, s)} +1\right) = \frac{w(s, v)}{\deg(s)} \frac{\deg(v)}{w(v, s)}=\frac{\deg(v)}{\deg(s)}$.

\end{proof}


For any two vertices which are connected by an edge, we have:

\begin{lemma} \label{lem:edgeExploration}
On any directed Eulerian graph $G = (V, E)$, for any $v, s, t\in V$ such that $e = (s, t) \in E$, $H_v(s, t) \leq \frac{\deg(v)}{w(s, t)}$
\end{lemma}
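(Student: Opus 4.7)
The plan is to adapt the Schur-complement argument used for \cref{prop:timeBetweenReturns}, now on three vertices instead of two. The degenerate cases $v=t$ (where $H_t(s,t)=0$ since the walk stops on reaching $t$) and $v=s$ (where visits to $s$ before $T_t$ are bounded above by visits to $s$ before the first traversal of the edge $(s,t)$, which is a geometric random variable with success probability $w(s,t)/\deg(s)$, and so has mean at most $\deg(s)/w(s,t)$) are immediate, so assume $s,t,v$ are pairwise distinct.

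Let $G_S$ be the Schur complement of $G$ onto $S=\{s,t,v\}$. By \cref{prop:eulerianSchure}, $G_S$ is Eulerian and preserves the degrees of $s,t,v$; and since the walk distribution on $G$ restricted to $S$ equals the walk distribution on $G_S$, the quantity $H_v(s,t)$ is the same in $G$ and in $G_S$. Denote the six directed inter-vertex weights in $G_S$ by $a=w'(s,t)$, $b=w'(s,v)$, $d=w'(t,s)$, $e=w'(t,v)$, $g=w'(v,s)$, $h=w'(v,t)$, and the self-loop weights by $c,f,i$ at $s,t,v$ respectively. Because $G_S$ starts from the induced subgraph of $S$ and only adds weight coming from paths that exit $S$ and return, $a \geq w(s,t)$.

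Setting $x=H_v(s,t)$ and $y=H_v(v,t)$, a first-step analysis in $G_S$ (using $\deg(s)-c=a+b$ and $\deg(v)-i=g+h$) gives the linear system
\[
(a+b)\,x \;=\; b\,y, \qquad (g+h)\,y \;=\; \deg(v) + g\,x,
\]
whose solution is $x = \dfrac{b\,\deg(v)}{a(g+h)+bh}$. The Eulerian balance at $v$ in $G_S$ reads $b+e = g+h$, and since $e \geq 0$ we obtain $g+h \geq b$. Hence $a(g+h)+bh \geq ab$, so
\[
x \;\leq\; \frac{\deg(v)}{a} \;\leq\; \frac{\deg(v)}{w(s,t)},
\]
as desired.

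The main delicate step I expect is pinning down exactly which Eulerian balance drives the inequality: the denominator $a(g+h)+bh$ only dominates $ab$ because of the balance at $v$ specifically, and it is reassuring (and worth checking) that the corresponding balances at $s$ and at $t$ are not needed. The two other ingredients --- that $G_S$ is Eulerian with the same vertex degrees, and that Schur complementation can only increase the $(s,t)$-edge weight to $a \geq w(s,t)$ --- are standard properties of Schur complements that are already used elsewhere in the paper.
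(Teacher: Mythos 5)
Your proof is correct, but it takes a genuinely different route from the paper's. The paper argues by excursion decomposition: each time the walk is at $s$ it leaves through $(s,t)$ with probability $w(s,t)/\deg(s)$, so the expected number of returns to $s$ before hitting $t$ is at most $\deg(s)/w(s,t)$; between consecutive visits to $s$, \cref{prop:timeBetweenReturns} gives $\deg(v)/\deg(s)$ expected visits to $v$; multiplying yields $\deg(v)/w(s,t)$. You instead generalize the Schur-complement calculation \emph{inside} the proof of \cref{prop:timeBetweenReturns}: collapse to $G_S$ with $S=\{s,t,v\}$, solve the resulting $2\times 2$ system for $x=H_v(s,t)$ explicitly, and then use the Eulerian balance $b+e=g+h$ at $v$ together with $a\geq w(s,t)$ to drop to the desired bound. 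Both are valid and both ultimately use Eulerianness in one pinpointed place (the paper via $\pi(v)\propto\deg(v)$ in \cref{prop:timeBetweenReturns}; you via the balance at $v$ in $G_S$, and degree preservation from \cref{prop:eulerianSchure}). Your computation checks out: $(g+h)(a+b)-gb = a(g+h)+bh \geq ab$, giving $x\leq \deg(v)/a \leq \deg(v)/w(s,t)$. The paper's argument is shorter and reuses an existing lemma as a black box, whereas yours is more self-contained and makes visible exactly which Schur-complement weight inequalities drive the bound --- a trade-off worth noting, but not a gap. One small caution on presentation: your linear system uses the convention that $H_v(v,\cdot)$ counts the visit at time $0$, which differs from the convention implicit in the proof of \cref{prop:timeBetweenReturns}; since $s,t,v$ are pairwise distinct in the main case this does not change $x=H_v(s,t)$, but it would be worth stating the convention explicitly to avoid an apparent mismatch with the earlier proposition.
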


\begin{proof}
Note that every time the walk is currently at $s$, there is a $\frac{w(s, t)}{\deg(s)}$ chance of moving to $t$, thus, in expectation, the expected number of times the walk returns to $s$ before reaching $t$ is at most $\frac{\deg(s)}{w(s,t)}$. Between each return to $s$, the expected number of visits to $v$ is $\frac{\deg(v)}{\deg(s)}$ by \cref{prop:timeBetweenReturns}. Thus, we conclude that:
\[H_v(s, t) \leq \frac{\deg(s)}{w(s,t)}\frac{\deg(v)}{\deg(s)} = \frac{\deg(v)}{w(s,t)}. \qedhere \]
\end{proof}

On graphs which are strongly connected by edges of some weight $w_c$, this observation leads to the conclusion that: 

\begin{lemma} \label{lem:connectedExploration}
On an Eulerian directed graph $G = (V, E)$ strongly connected by edges of weight $w_c$, for any $s, t \in V$, $H_s(s, t) \leq \frac{\deg(s) n}{w_c}$ 
\end{lemma}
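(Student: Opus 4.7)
The plan is to combine the triangle inequality from Proposition \ref{prop:triangle} with the single-edge hitting bound from Lemma \ref{lem:edgeExploration}, using the high-weight connectivity hypothesis to produce a short heavy path from $s$ to $t$ along which we can sum local bounds.

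First I would invoke the hypothesis that $G$ is strongly connected by edges of weight $w_c$ to obtain a directed path $s = v_0, v_1, \ldots, v_k = t$ in $E$ such that every consecutive edge $(v_i, v_{i+1})$ has weight at least $w_c$. Without loss of generality this path can be chosen to be simple (just shortcut any repeated vertex), so $k \leq n-1$.

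Next, repeatedly applying Proposition \ref{prop:triangle} (with $v = s$) along the path gives
\[ H_s(s, t) \leq \sum_{i=0}^{k-1} H_s(v_i, v_{i+1}). \]
Then for each single-edge hop, Lemma \ref{lem:edgeExploration} (applied with $v = s$ and with the edge $(v_i, v_{i+1})$) yields $H_s(v_i, v_{i+1}) \leq \deg(s)/w(v_i, v_{i+1}) \leq \deg(s)/w_c$. Summing over the at most $n-1$ hops gives $H_s(s, t) \leq (n-1)\deg(s)/w_c \leq n\deg(s)/w_c$, as desired.

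I do not foresee a serious obstacle here; the only thing one has to be slightly careful about is that the triangle inequality is being applied with $v$ fixed to $s$ while the endpoints advance along the path, and that the hypothesis of \ref{lem:edgeExploration} requires the specific hop $(v_i, v_{i+1})$ to be an actual edge of $G$, which is guaranteed by the choice of the $w_c$-heavy path. Both conditions are immediate from the setup, so the argument is a short combination of already-established lemmas.
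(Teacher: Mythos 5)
Your proof is correct and follows exactly the paper's argument: extract a $w_c$-heavy simple path from $s$ to $t$ via the strong-connectivity hypothesis, split $H_s(s,t)$ along the path using Proposition~\ref{prop:triangle}, and bound each hop by $\deg(s)/w_c$ via Lemma~\ref{lem:edgeExploration}. There is no meaningful difference from the paper's proof.
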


\begin{proof}
By connectivity assumptions, there exists a sequence of at most $n$ vertices $s=v_0, v_1, \dots, v_k=t$ which form a path connecting $s$ and $t$ such that $w(v_{i-1}, v_i) \geq w_c$ for all $i \in [k]$. Then, combining \cref{prop:triangle} with \cref{lem:edgeExploration},
\[H_s(s, t) \leq \sum_{i = 1}^{k}H_s(v_{i-1}, v_i) \leq \sum_{i = 1}^{k}\frac{\deg(s)}{w(v_{i-1}, v_i)} \leq \frac{n \deg(s) }{w_c}. \qedhere\]
\end{proof}

This trivially bounds the expected number of returns to a vertex $s$ before a graph $G$ is covered by a random walk by $\frac{ n^2 \deg(s)}{w_c}$. 

\begin{remark}
While this bound is sufficient for our purposes, using Matthew's trick, this bound can be further tightened to $\frac{n\log n \deg(s)}{w_c}$ \cite{MAT88}.
\end{remark}

\begin{lemma}\label{lem:clusterExploration}
For any node $\mathcal{S} \in \mathcal{T}$ in the decomposition obtained by \textup{\cref{alg:decomposition}}, the expected number of edges traversed between children of $\mathcal{S}$ before every vertex in $\mathcal{S}$ has been visited is at most $n^2m^2$.
\end{lemma}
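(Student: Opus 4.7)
The plan is to pass to the Schur complement $G_{\sS}$ (eliminating the vertices outside of $\sS$), on which the associated random walk is Eulerian with $\deg_{G_{\sS}}(v)=\deg_G(v)$ for every $v\in\sS$, by \Cref{prop:eulerianSchure}. The random walk on $G$ restricted to the times it lies in $\sS$ is distributed exactly as the random walk on $G_{\sS}$, so counting jumping-edge traversals of the $G$-walk up until $\sS$ is covered reduces to counting them in the $G_{\sS}$-walk up until its cover time. Since $G_{\sS}$ contains the induced subgraph $G(\sS)$, \Cref{lem:connectedness} implies that $G_{\sS}$ is strongly connected by edges of weight at least $w_c:=w_{\max}(\sS)/m$, and every jumping edge of $\sS$ (being an edge of $G(\sS)$) is present in $G_{\sS}$ with its original weight.

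Next, I would decompose the expected count by source vertex. For $v\in\sS$, let $X_v$ be the expected number of visits to $v$ in the $G_{\sS}$-walk before $\sS$ is covered, and let $w^{\mathrm{jump}}(v)$ denote the total weight of jumping edges of $\sS$ leaving $v$. Conditional on being at $v$, the next step of the $G_{\sS}$-walk is a jumping edge of $\sS$ with probability exactly $w^{\mathrm{jump}}(v)/\deg(v)$, since $\deg_{G_{\sS}}(v)=\deg_G(v)$ and the original edges of $G(\sS)$ survive with their weights unchanged. Hence the expected number of jumping-edge traversals is at most $\sum_{v\in\sS} X_v\cdot w^{\mathrm{jump}}(v)/\deg(v)$.

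The last step is to plug in quantitative bounds. By the trivial cover-time bound noted in the remark following \Cref{lem:connectedExploration} (applicable to $G_{\sS}$, which is Eulerian, strongly connected by edges of weight $w_c$, and on at most $n$ vertices), $X_v\leq n^2\deg(v)/w_c$. Since $\sum_{v\in\sS}w^{\mathrm{jump}}(v)$ is just the total weight of jumping edges of $\sS$, which is at most $m\cdot w_{\max}(\sS)$, the sum collapses to
\[
\sum_{v\in\sS}\frac{n^2\deg(v)}{w_c}\cdot\frac{w^{\mathrm{jump}}(v)}{\deg(v)}=\frac{n^2}{w_c}\sum_{v\in\sS}w^{\mathrm{jump}}(v)\leq \frac{n^2\cdot m\cdot w_{\max}(\sS)}{w_{\max}(\sS)/m}=n^2 m^2.
\]
The crucial point is that the factor of $m$ in the connectivity bound $w_c=w_{\max}/m$ from \Cref{lem:connectedness} cancels with the $m$ edge-count bound on total jumping-edge weight, so $w_{\max}$ drops out and we obtain a bound depending only on $n$ and $m$. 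The main ``must-not-mess-up'' step is verifying that the Schur-complement reduction faithfully preserves both the cover-time and the jumping-edge statistics; once that is secured, the rest is a one-line decomposition and plug-in.
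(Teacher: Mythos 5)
Your proposal is correct and follows essentially the same route as the paper's own proof: pass to the Schur complement $G_{\sS}$ (invoking \Cref{prop:eulerianSchure} and \Cref{lem:connectedness}), bound the number of visits to each vertex before cover using the remark after \Cref{lem:connectedExploration}, and multiply by the one-step probability of taking a jumping edge. The only cosmetic difference is that you group the final sum by source vertex and bound the total jumping-edge weight by $m\,w_{\max}(\sS)$, whereas the paper bounds the traversals per jumping edge by $n^2m$ and then sums over the at most $m$ jumping edges; these are the same computation reorganized.
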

\begin{proof}
For each edge $e = (u, v)$ which jumps between children of $\sS$, the expected number of times edge $e$ is traversed is $\frac{w_e}{\deg(u)}$ times the expected number of times vertex $u$ is reached by a random walk before all vertices in  $\sS$ are reached. 
To bound the number of times $u$ is reached before $\sS$ is covered, consider a random walk on the Schur complement $G_{\sS}$. The expected number of times $u$ is reached in a random walk on $G$ before $\sS$ is covered is the same as the expected number of times $u$ is reached by a random walk on $G_{\sS}$ before $G_{\sS}$ is covered.

By \cref{prop:eulerianSchure}, $G_{\sS}$ will be Eulerian and the degrees of all vertices in $\sS$ will be the same in $G$ and $G_\sS$. By \cref{lem:connectedness}, $\sS$ is strongly connected by edges of weight $\frac{w_{\text{max}}(\sS)}{m}$, and so $G_{\sS}$ is as well. Then, by \cref{lem:connectedExploration}, the expected number of times $u$ is hit before ${G_{\sS}}$ is covered is at most by $\frac{\deg(u)n^2m}{w_{\text{max}}}$. Thus, the expected number of times edge $e$ is traversed before $\sS$ is covered is at most
\[\frac{\deg(u)n^2m}{w_{\text{max}}}\frac{w_e}{\deg(u)} \leq n^2 m\] since $w_c \geq \frac{w_e}{m}$. As there are at most $m$ jumping edges of $\mathcal{S}$, in expectation,  at most $n^2m^2$ edge traversals between children of $\mathcal{S}$ will occur before every vertex in $\sS$ has been reached.
\end{proof}

Thus, for large enough $L = \poly(n, m)$, by Markov's inequality, with high probability every node $\sS \in \T$ is covered by the time $L$ edges have jumped across $\sS$. This means the transcript returned by \texttt{AllEdges} will contain all first visit edges with high probability. As this transcript is polynomial in length, all first visit edges and the corresponding arborescence they form can be extracted and returned in \NC{}. 
	\section{Discussion and Open Problems}
\label{sec:discussion}	

We showed how to sample arborescences, and as a special case spanning trees, from a given weighted graph using an \RNC{} algorithm. While this is a step in resolving the disparity between parallel sampling and parallel counting algorithms, more investigation is needed. In particular, for the list of problems with determinant-based counting in \cref{sec:intro}, designing \RNC{} sampling algorithms remains open.

One of these problems in particular, namely sampling Eulerian tours from an Eulerian digraph, is intimately connected to sampling arborescences, due to the BEST theorem \cite{BE87, ST41}. However the reduction, while polynomial-time implementable, is not known to be in \NC{}.
\begin{question}
		Is there an algorithm for sampling Eulerian tours uniformly at random in Eulerian digraphs?
\end{question}

The important tool our result relied on was the Aldous-Broder algorithm. So it is natural to ask whether there are generalizations of the Aldous-Broder result to settings beyond spanning trees and arborescences. In particular the bases of a \emph{regular matroid} are a proper generalization of spanning trees in a graph, and they have a decomposition in terms of graphic, co-graphic, and some special constant-sized matroids \cite{seymour1980decomposition}.
\begin{question}
	Can we sample from a regular matroid, or equivalently from a volume-based distribution defined by totally unimodular vectors in \RNC{}?
\end{question}

Yet another direction for generalization are higher-dimensional equivalents of spanning trees and arborescences. For example, \textcite{GP14} provided a generalization of the algorithm of \textcite{Wil96} for sampling arborescences from graphs to hypergraphs. While Wilson's algorithm is different from that of Aldous-Broder, it is closely related. Can these generalizations be efficiently parallelized?

	\PrintBibliography
\end{document}